\numberwithin{equation}{section}
\theoremstyle{plain}
\newtheorem{thm}{Theorem}[section]
\newtheorem{defn}{Definition}[section]
\newtheorem{remark}{Remark}[section]
\newtheorem{lem}[thm]{Lemma}
\def\bigE{{\mathbb{E}}}
\DeclareMathOperator{\sgn}{sgn}
\begin{document}

\begin{frontmatter}
\title{When Multiplicative Noise Stymies Control}
%\title{A Sample Document\thanksref{T1}}
\runtitle{When Multiplicative Noise Stymies Control}
%\thankstext{T1}{Footnote to the title with the ``thankstext'' command.}

\begin{aug}
\author{\fnms{Jian} \snm{Ding}\ead[label=e1]{jianding@galton.uchicago.edu}}\thanksref{t1},
\author{\fnms{Yuval} \snm{Peres}\ead[label=e2]{peres@microsoft.com}}\thanksref{t2}, %\thanksref{t3,m1,m2}
\author{\fnms{Gireeja} \snm{Ranade}\ead[label=e3]{giranade@microsoft.com}}\thanksref{t2}
\and
\author{\fnms{Alex} \snm{Zhai}\ead[label=e4]{azhai@stanford.edu}}\thanksref{t3}
%\ead[label=u1,url]{http://www.foo.com}}

\thankstext{t1}{University of Chicago, Department of Statistics, \printead{e1}}
\thankstext{t2}{Microsoft Research, Redmond, \printead{e2}, \printead*{e3}}% \phantom{E-mail:\ }\printead*{e3}}
\thankstext{t3}{Stanford University, Department of Mathematics,\printead{e4}}

%\thankstext{t2}{First supporter of the project}
%\thankstext{t3}{Second supporter of the project}
\runauthor{J. Ding et al.}

%\affiliation{University of Chicago\thanksmark{m1}, Microsoft Research, Redmond\thanksmark{m2}, Stanford University\thanksmark{m3}}

%\address{Department of Statistics, \#104B Eckhart Hall, University of Chicago, Chicago, IL 60637\\
%\printead{e1}}

\end{aug}

%\begin{aug}
%\author{\fnms{Jian} \snm{Ding}\thanksref{m1}\ead[label=e1]{jianding@galton.uchicago.edu}},
%\author{\fnms{Yuval} \snm{Peres}\thanksref{m2}\ead[label=e2]{peres@microsoft.com}}, %\thanksref{t3,m1,m2}
%\author{\fnms{Gireeja} \snm{Ranade}\thanksref{m2}\ead[label=e3]{giranade@microsoft.com}}
%\and
%\author{\fnms{Alex} \snm{Zhai}\thanksref{m3}
%\ead[label=e4]{azhai@stanford.edu}}
%%\ead[label=u1,url]{http://www.foo.com}}
%
%
%%\thankstext{t1}{Some comment}
%%\thankstext{t2}{First supporter of the project}
%%\thankstext{t3}{Second supporter of the project}
%\runauthor{J. Ding et al.}
%
%\affiliation{University of Chicago\thanksmark{m1}, Microsoft Research, Redmond\thanksmark{m2}, Stanford University\thanksmark{m3}}
%
%\address{Department of Statistics\\
%\#104B Eckhart Hall \\
%University of Chicago\\
%Chicago, IL 60637\\
%\printead{e1}}
%%\phantom{E-mail:\ }\printead*{e2}}
%
%\address{Microsoft Research\\
%One Microsoft Way\\
%Redmond, WA 98052\\
%\printead{e2}\\
%\phantom{E-mail:\ }\printead*{e3}}
%
%\address{Department of Mathematics\\
%450 Serra Mall\\
%Stanford, CA 94305\\
%\printead{e4}}
%\end{aug}

\begin{abstract}
We consider the stabilization of an unstable discrete-time linear system that is observed over a channel corrupted by continuous multiplicative noise. Our main result shows that if the system growth is large enough, then the system cannot be stabilized in a second-moment sense. This is done by showing that the probability that the state magnitude remains bounded must go to zero with time. Our proof technique recursively bounds the conditional density of the system state (instead of focusing on the second moment) to bound the progress the controller can make. This sidesteps the difficulty encountered in using the standard data-rate theorem style approach; that approach does not work because the mutual information per round between the system state and the observation is potentially unbounded.

It was known that a system with multiplicative observation noise can be stabilized using a simple memoryless linear strategy if the system growth is suitably bounded. In this paper, we show that while memory cannot improve the performance of a linear scheme, a simple non-linear scheme that uses one-step memory can do better than the best linear scheme.
\end{abstract}

\begin{keyword}%[class=MSC]
\kwd[Primary: ]{control}
\kwd{stability}
\kwd{multiplicative noise}
%\kwd[; secondary ]{60K35}
\end{keyword}

%\begin{keyword}
%\kwd{sample}
%\kwd{\LaTeXe}
%\end{keyword}

\end{frontmatter}

%%%%%%%%%%%%%%%%%
%% Introduction
%%%%%%%%%%%%%%%%%
\section{Introduction}

We consider the control and stabilization of a system observed over a multiplicative noise channel. Specifically, we analyze the following system, $\mathcal{S}_{a}$, with initial state $X_{0} \sim \mathcal{N}(0,1)$:
\begin{align}
\begin{split}
X_{n+1} &= a \cdot X_{n} - U_{n}, \\
Y_{n} &= Z_{n} \cdot X_{n}. \label{eq:systema}
\end{split}
\end{align}
In the preceding formulation, the system state is represented by $X_{n}$ at time $n$, and the control $U_{n}$ can be any function of the current and previous observations $Y_{0}$ to $Y_{n}$. The $Z_{n}$'s are i.i.d.\ random variables with a known continuous distribution. The realization of the noise $Z_{n}$ is unknown to the controller, much like the fading coefficient (gain) of a channel might be unknown to the transmitter or receiver in non-coherent communication. The constant $a$ captures the growth of the system. The controller's objective is to stabilize the system in the second-moment sense, i.e. to ensure that $\sup_{n}\bigE[|X_{n}|^{2}] < \infty$. Our objective is to understand the largest growth factor $a$ that can be tolerated for a given distribution on $Z_{n}$. Fig.~\ref{fig:system} represents a block diagram for this system. 

\begin{figure}[pthb]
\centering
    \includegraphics[width=.4\textwidth]{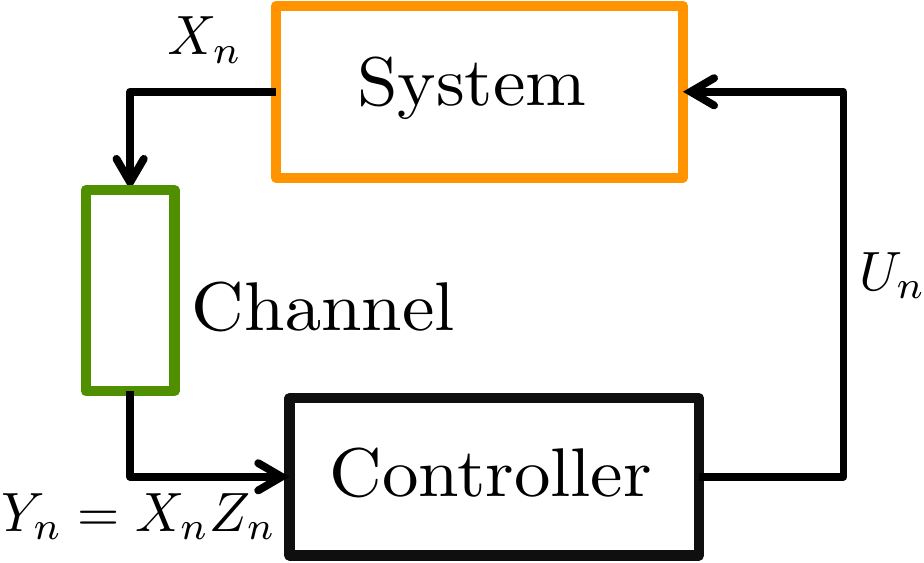} % requires the graphicx package
   \caption{The state $X_{n}$ is observed over a multiplicative noise channel $Y_{n} = X_{n}Z_{n}$.}
   \label{fig:system}
\end{figure}

Our main theorem provides an impossibility result for stabilizing the system $\mathcal{S}_{a}$. 

{\thm{Let the $Z_{n}$ be i.i.d.~random variables with finite mean and variance and with bounded density $f_{Z}(z)= e^{-\phi(z)}$, where $\phi(\cdot)$ is a polynomial of even degree with positive leading coefficient. Then, there exists $a \in \mathbb{R}$, $a < \infty$ such that $|X_{n}|$ in~\eqref{eq:systema} satisfies $\mathbb{P}(|X_{n}|< M) \rightarrow 0$ for all $M < \infty$.\label{thm:intro1}}}

Thm.~\ref{thm:main} generalizes this result to a larger class of distributions for $Z_{n}$. Note that the conditions on $\phi(\cdot)$ in Thm.~\ref{thm:intro1} are satisfied by \mbox{$Z_{n}\sim \mathcal{N}(1, \sigma^{2})$.} 

%Thm.~\ref{thm:intro1} was shown in our previous paper~\cite{tiger}, and the current paper shows this result for more general distributions on $Z_{n}$. 

We also discuss a few sufficient conditions for second-moment stability of the system in this paper. When $Z_{n}$ has mean $1$ and variance $\sigma^{2}$, we observe that that a system growth of  $a^* = \sqrt{1 + \frac{1}{\sigma^2}}$ can be stabilized in the second-moment sense using a simple \emph{linear} strategy (Prop.~\ref{thm:secmomach}). Further, we show that the best linear strategy to control the system $\mathcal{S}_{a}$ in~\eqref{eq:systema} is memoryless (Thm.~\ref{thm:greedy}). Our second main result (Thm.~\ref{thm:nonlinimprovement}) shows that a \emph{non-linear} controller can improve on the performance of the best linear strategy. We state this here for the case where $Z_{n} \sim \mathcal{N}(1,1)$.

{\thm{Let $Z_n \sim \mathcal{N}(1,1)$. Then the system $\mathcal{S}_{a}$ in~\eqref{eq:systema} with $a \leq \sqrt{2}$ can be stabilized in the second-moment sense by a linear control strategy. Further, there exists $a > \sqrt{2}$ for which a non-linear controller can stabilize the system in a second-moment sense.}}

In particular, there exists a non-linear strategy with memory that can stabilize $\mathcal{S}_{a}$ in a second-moment sense with $a = \sqrt{2} + 1.6 \times 10^{-3}$. 

We further believe that non-linear schemes without memory cannot stabilize the system for $a < a^{*}$, and some evidence in this direction is provided in Thm.~\ref{thm:nocontraction}. Finally, in the case where the $Z_{n}$ have mean zero, a linear strategy cannot stabilize the system in the second-moment sense for any growth factor $a$ (Thm.~\ref{thm:zeromean}), but a non-linear scheme with memory can stabilize it for some value of the growth factor $a$ (Thm.~\ref{thm:nonlinimprovement2}).

%While this necessary condition is not tight, we also provide three control schemes for the system. The first provides a sufficient condition for second-moment stability, and the second provides a condition for keeping the system tight. Finally, we provide a simple non-linear control scheme can outperform the linear scheme as well.

%For a simple achievable strategy, suppose the $Z_{n}$'s are i.i.d. with mean $1$ and variance $\sigma^{2}$. It is easy to verify that a simple memoryless linear strategy can mean-square stabilize the system~\eqref{eq:systema} if and only if $a^{2} \leq (1 + \frac{1}{\sigma^{2}})$. This will be discussed further in Sec.~\ref{sec:linear-schemes}.
\subsection{Model motivation}
Multiplicative noise on the observation channel can model the effects of a fast-fading communication channel (rapidly changing channel gain), as well as the impact of sampling and quantization errors~\cite{meyrdigital, gireejaAllerton}. A more detailed discussion of multiplicative noise models is available in~\cite{gireejabeast}. 

We illustrate below how synchronization or sampling errors can lead to multiplicative noise, following a discussion from~\cite{gireejaAllerton}. Consider the nearly trivial continuous-time system, %
$$\dot{X}(t) = a \cdot X(t), $$
which is sampled at regular intervals of $t_{0}$. The difference equation corresponding to the state at the $n$th time step is given by
$X_{n+1} = e^{a t_{0}} \cdot X_{n}.$
However, in the presence of synchronization error the $n$th sample, $Y_{n}$, might be collected at time $nt_{0} + \Delta $ instead of precisely at $nt_{0}$. Then,
\begin{align}
Y_{n} = e^{a(nt_{0}+\Delta )}X(0) = Z_{n} X_{n}, \nonumber
\end{align}
where $Z_{n}$ is a continuous random variable, since the jitter $\Delta$ is a continuous random variable.

\subsection{Proof approach}
We introduce a new converse approach in the proof of Thm.~\ref{thm:main}; instead of focusing on the second-moment, our proof bounds the density of the state and thus shows the instability of any moment of the state. We believe these techniques are a primary contribution of the work.

A key element of the proof is that a ``genie'' observes the state of the system and provides a quantized version of the logarithm of the state to the controller at each time as extra side-information in addition to the multiplicative noise observation. This side-information bounds the state in intervals of size $2^{-k}$ (with $k$ increasing as time increases). We know from results on non-coherent communication~\cite{lapidothLogLog} and carry-free models~\cite{carryfree} that only the order of magnitude of the message can be recovered from a transmission with multiplicative noise. As a result, this side-information does not effectively provide much extra information, but it allows us to quantify the rate at which the controller may make progress.

\subsection{Related work}
Our problem is connected to the body of work on data-rate theorems and control with communication constraints as studied in~\cite{wong1997systems, tatikonda, nair2007feedback,minero2009data}. These data-rate theorems tell us that a noiseless observation data rate $R> \log|a|$ is necessary and sufficient to stabilize a system in the second-moment sense. Our setup considers multiplicative noise on the observation channel instead of observations over a noiseless but rate-limited channel. Paralleling the data-rate theorems, Prop.~\ref{thm:secmomach} provides a control strategy that can stabilize the system when $\frac{1}{2} \log (1 +\frac{1}{\sigma^{2}}) > \log |a|$ for $Z_{n}$ with mean $1$ and variance $\sigma^{2}$.

Our problem is also inspired by the intermittent Kalman filtering problem~\cite{sinopoli2004kalman, parkIntermittent}, as well the problem of control over lossy networks~\cite{schenato2007foundations,eliaFading} (i.e. estimation and control over Bernoulli multiplicative noise channels). The setup in our paper generalizes those setups to consider a general continuous multiplicative noise on the observation.

The uncertainty threshold principle~\cite{uncertaintyThreshold} considers a systems with Gaussian uncertainty on the system growth factor and the control gain, and provides limits for when the system is stabilizable in a second-moment sense. Our work complements this result by considering uncertainty on the observation gain.

A related problem is that of estimating a linear system over multiplicative noise. While early work on this had been limited to exploring linear estimation strategies~\cite{rajasekaran1971optimum, tugnait1981stability}, some recent work show a general converse result for the estimation problem over multiplicative noise for both linear and non-linear strategies~\cite{gireejaAllerton}. We note that our problem can also be interpreted as an ``active'' estimation problem for $X_{0}$, and our impossibility result applies to both linear and non-linear control strategies. However, techniques from the estimation converse result or the data-rate theorems do not work for our setup here. Unlike the estimation problem, we cannot describe the distribution of $X_{n}$ in our problem since the control $U_{n}$ is arbitrary. For the same reason, we also cannot bound the range of $X_{n}$ or the rate across the observation channel to use a data-rate theorem approach.

Some of our results and methods are summarized in~\cite{tiger}.

%However, the approach from~\cite{controlcapacity} does not seem to work for the system considered here.

%%%%%%%%%%%%%%%%%%%
%% Section: Problem Statement
%%%%%%%%%%%%%%%%%%%
\section{Problem statement}

Consider the system $\mathcal{S}_{a}$ in~\eqref{eq:systema}. For simplicity, let the initial state $X_{0}$ be distributed as $X_{0} \sim \mathcal{N}(0,1)$. Let $Z_{n}$ be i.i.d. random variables with finite second moment and bounded density $f_{Z}(z) = e^{-\phi(z)}$. Without loss of generality, we will use the scaling $\bigE Z_{n} = 1$ and $\mathrm{Var}(Z_{n}) = \sigma^{2}$. The notation $Z_n$, $f_Z$, $\phi$, and $\sigma$ defined here will be used throughout the paper.

We introduce two definitions for stability of the system. The first is the notion of stability that is most commonly studied in control theory, i.e. second-moment stability.

{\defn{The system $\mathcal{S}_{a}$ in~\eqref{eq:systema} is said to be \textbf{second-moment} stabilizable if there exists an adapted control strategy $U_{0}, \cdots, U_{n}$ (a control strategy where $U_k$ is a function of $Y_0, \cdots , Y_k$ for each $0 \le k \le n$) such that
    \[ \sup_{n} \bigE[|X_{n}|^{2}] < \infty. \]
}\label{def:sms}}
{\defn{We say the controller can keep the system $\mathcal{S}_{a}$ in~\eqref{eq:systema} \textbf{tight} if for every $\epsilon$ and for every $n$ there exists an adapted control strategy $U_{0}, \cdots, U_{n}$, and there exist $M_{\epsilon}, N_{\epsilon}<\infty$ such that
\[ \mathbb{P}(|X_{n}| < M_{\epsilon}) \geq 1 - \epsilon, \]
 for $n > N_{\epsilon}$.} \label{def:tight}}

%Logarithms are taken to the base $2$ throughout the paper unless otherwise indicated.

 %%%%%%%%%%%%%%%%%%
%% Linear Schemes
%%%%%%%%%%%%%%%%%%
\section{Linear schemes} \label{sec:linear-schemes}

This section first provides a simple memoryless linear strategy that can stabilize the system in a second-moment sense in Prop.~\ref{thm:secmomach}. We show in Thm.~\ref{thm:greedy} that this strategy is optimal among linear strategies. In Thm.~\ref{thm:zeromean}, we highlight the limitations of linear strategies by showing that when $\bigE Z_n = 0$, linear strategies cannot stabilize the system for any growth factor $a > 1$. Finally, we consider stability in the sense of keeping the system tight and provide a scheme that achieves this in Thm.~\ref{thm:tightach}.

{\prop[A linear memoryless strategy]{The controller given by $U_{n} = d^{*} Y_{n}$ where $d^{*} = \frac{a}{1 + \sigma^{2}}$, can stabilize the system $\mathcal{S}_{a}$ in~\eqref{eq:systema} in a second-moment sense (Def.~\ref{def:sms}) if $a \le a^{*}$, where $a^{*} =  \sqrt{1 + \frac{1}{\sigma^{2}}}$. \label{thm:secmomach}} }

\begin{proof}
The above strategy gives us $X_{n+1} = (a - d^*Z_n)X_n$. Since $Z_n$ is independent of $X_n$, we can write:
\begin{align*}
  \bigE[X_{n+1}^{2}] &= \bigE[(a - d^*Z_{n})^{2}] \bigE[X_{n}^{2}] = (a^2 - 2ad^* + (d^*)^{2}(1 + \sigma^{2})) \bigE[X_{n}^{2}] \\
  &= \frac{a^2 \sigma^2}{1 + \sigma^2} \cdot \bigE[X_{n}^{2}].
\end{align*}

Under this control strategy $\sup_{n} \bigE[|X_{n}|^{2}]$ is bounded if and only if \\ \mbox{$a^{2}\leq1 + \frac{1}{\sigma^{2}}$}. %It is interesting to note that we can rewrite this as $\log|a| \leq \frac{1}{2} \log \left( 1 + \frac{1}{\sigma^{2}}\right)$.\qedhere
\end{proof}

Note that the above controller is linear in that $U_n$ is a linear function of the $Y_i$ and memoryless in that $U_n$ depends only on $Y_n$ and not $Y_i$ for $i < n$. We might expect an improvement in the achievable performance of a linear strategy if we also allow it to use memory, i.e. the past $Y_{n}$'s. However, it turns out that the optimal linear strategy is in fact memoryless.

\begin{thm}
The control strategy given by $U_{n} = d^{*} Y_{n}$ where $d^{*} = \frac{a}{1 + \sigma^{2}}$ is the optimal linear strategy to stabilize $\mathcal{S}_a$ in a second-moment sense, in particular, for all $a >  \sqrt{1 + \tfrac{1}{\sigma^{2}}}$ the system $\mathcal{S}_{a}$ in~\eqref{eq:systema} cannot be second-moment stabilized (Def.~\ref{def:sms}) using a linear strategy.
\label{thm:greedy}
\end{thm}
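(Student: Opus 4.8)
The plan is to turn the problem into a scalar $L^2$-recursion by exploiting that a linear controller keeps the state linear in the initial condition. Writing a linear strategy as $U_n = \sum_{k=0}^n c_{n,k} Y_k$ with deterministic coefficients and using $Y_k = Z_k X_k$, induction on~\eqref{eq:systema} gives $X_n = \alpha_n X_0$, where $\alpha_0 = 1$, $\alpha_{n+1} = a\alpha_n - \sum_{k\le n} c_{n,k} Z_k \alpha_k$, and $\alpha_n$ is a function of $Z_0,\dots,Z_{n-1}$ alone, hence independent of $X_0$. Therefore $\bigE[X_n^2] = \bigE[\alpha_n^2]$, and second-moment stability is equivalent to $\sup_n \bigE[\alpha_n^2] < \infty$. (An affine term in $U_n$ would make $X_n = \alpha_n X_0 + \delta_n$ and only add $\bigE[\delta_n^2]\ge 0$ to $\bigE[X_n^2]$ since $\bigE X_0 = 0$, so it cannot help; I would dismiss it at the start.) Put $W_k := Z_k\alpha_k$ for the ``normalized observation'' and $L_m := \mathrm{span}(W_0,\dots,W_m)\subset L^2$; then the controller's increment $\beta_n := \sum_{k\le n} c_{n,k} W_k$ lies in $L_n$ and $\alpha_{n+1} = a\alpha_n - \beta_n$.

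The heart of the argument is a geometric observation: since $Z_n$ is independent of everything indexed below $n$ and $\bigE Z_n = 1$, for each $k<n$ we have $\bigE[W_n W_k] = \bigE[Z_n\alpha_n W_k] = \bigE[\alpha_n W_k]$, so $W_n$ and $\alpha_n$ have the same inner products against all of $L_{n-1}$ and hence $\mathrm{Proj}(W_n\mid L_{n-1}) = \mathrm{Proj}(\alpha_n\mid L_{n-1})$. Let $e_n := \alpha_n - \mathrm{Proj}(\alpha_n\mid L_{n-1})$ be the part of $\alpha_n$ the controller cannot see linearly. Projecting $\alpha_{n+1} = a\alpha_n - \beta_n$ onto $L_n$ and using $\beta_n\in L_n$ gives
\[ e_{n+1} = a\big(\alpha_n - \mathrm{Proj}(\alpha_n\mid L_n)\big), \]
which is independent of all the controller's coefficients. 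Expanding $\mathrm{Proj}(\alpha_n\mid L_n)$ as $\mathrm{Proj}(\alpha_n\mid L_{n-1})$ plus the rank-one correction along $\tilde W_n := W_n - \mathrm{Proj}(W_n\mid L_{n-1})$ and evaluating the two inner products via $\bigE Z_n = 1$, $\bigE Z_n^2 = 1+\sigma^2$ and independence, I get, with $s_n := \bigE[\alpha_n^2]$ and $q_n := \bigE[e_n^2]$,
\[ q_{n+1} = \frac{a^2\sigma^2\, q_n s_n}{\sigma^2 s_n + q_n}. \]

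To conclude I use only $q_n \le s_n$ (a residual is no larger than the vector), so the denominator is at most $(1+\sigma^2)s_n$ and $q_{n+1}\ge \tfrac{a^2\sigma^2}{1+\sigma^2}q_n$. Since $\alpha_0 = 1$ and $L_{-1} = \{0\}$, we have $q_0 = 1$, so $q_n \ge \big(\tfrac{a^2\sigma^2}{1+\sigma^2}\big)^n$; when $a^2 > 1 + \tfrac{1}{\sigma^2}$ this tends to infinity, whence $\bigE[X_n^2] = s_n \ge q_n \to \infty$ and no linear strategy second-moment stabilizes $\mathcal{S}_a$. Together with Prop.~\ref{thm:secmomach}, which achieves stability for $a\le\sqrt{1+1/\sigma^2}$ using the memoryless controller $U_n = d^*Y_n$, this establishes its optimality among linear strategies.

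I expect the main obstacle to be the step identifying that the residual $e_n$ evolves autonomously, driven only by the noise and not by the controller; once that is in place the rest is routine $L^2$ bookkeeping. The one technical point to dispatch is that the degenerate cases never occur: one needs $s_n > 0$ (equivalently $\bigE[\tilde W_n^2] = \sigma^2 s_n + q_n > 0$), which holds because $\alpha_n$, viewed as a polynomial in $Z_0,\dots,Z_{n-1}$, has constant term $a^n\ne 0$ (clear from the recursion), and the $Z_k$ have a density, so $\alpha_n$ is not a.s.\ zero.
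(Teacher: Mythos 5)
Your argument is correct, but it proceeds by a genuinely different route than the paper. The paper couples the given linear strategy with a parallel system $\widetilde{X}_n$ driven by the same noise and running the memoryless controller $U_n = d^*Y_n$, and shows by induction that $\bigE[(X_n-\widetilde{X}_n)\widetilde{X}_n]=0$, whence $\bigE[X_n^2]\ge \bigE[\widetilde{X}_n^2]=\bigl(\tfrac{a^2\sigma^2}{1+\sigma^2}\bigr)^n\bigE[X_0^2]$; this requires guessing the candidate optimum in advance and verifying an orthogonality identity term by term. You instead exploit the reduction $X_n=\alpha_nX_0$ with deterministic coefficients, pass to the $L^2$ geometry of the spaces $L_n=\mathrm{span}(W_0,\dots,W_n)$, and track the residual $q_n=\bigE[e_n^2]$ of $\alpha_n$ after projection onto $L_{n-1}$; the identity $\mathrm{Proj}(W_n\mid L_{n-1})=\mathrm{Proj}(\alpha_n\mid L_{n-1})$ (from $\bigE Z_n=1$ and independence), the computation $\|\tilde W_n\|^2=\sigma^2 s_n+q_n$, $\langle\alpha_n,\tilde W_n\rangle=q_n$, and the resulting recursion $q_{n+1}=\tfrac{a^2\sigma^2 q_n s_n}{\sigma^2 s_n+q_n}\ge\tfrac{a^2\sigma^2}{1+\sigma^2}q_n$ all check out, as do the non-degeneracy argument ($s_n>0$ via the constant term $a^n$) and the final step $s_n\ge q_n\ge\bigl(\tfrac{a^2\sigma^2}{1+\sigma^2}\bigr)^n$. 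What your approach buys is that it does not presuppose the optimal strategy: it quantifies directly, via an innovations-type projection argument, the maximal one-step contraction any linear controller can achieve, and in fact yields the extra information that the linearly-unresolvable part of the state contracts by at most the factor $\tfrac{a^2\sigma^2}{1+\sigma^2}$; the paper's coupling proof is shorter and avoids projection machinery. One small overstatement to fix: $e_{n+1}=a(\alpha_n-\mathrm{Proj}(\alpha_n\mid L_n))$ is independent of the \emph{current} round's coefficients $c_{n,k}$ only, since $\alpha_n$ and $L_n$ themselves depend on past choices; this does not affect the proof, because the inequality $q_{n+1}\ge\tfrac{a^2\sigma^2}{1+\sigma^2}q_n$ is derived uniformly over all linear strategies.
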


\begin{proof}
Suppose the system $\mathcal{S}_a$ evolves following some linear strategy of the form $U_n = \sum_{i = 1}^n \alpha_{n, i} Y_i$.

We define a system $\widetilde{\mathcal{S}}$ such that $\widetilde{X}_{n}$ that evolves in parallel with $X_{n}$ and tracks the behavior of the strategy  $U_{n} = d^{*} Y_{n}$. Formally, $\widetilde{\mathcal{S}}$ is defined as:
\begin{align*}
\widetilde{X}_{0} &= X_{0} \\
\widetilde{Y}_{n} &= Z_{n}\widetilde{X}_{n} \\
\widetilde{X}_{n+1} &= a \widetilde{X}_{n} -  d^{*} \widetilde{Y}_{n},
\end{align*}
where the $Z_{n}$'s are the same as those acting on $X_{n}$. Then, we can write
\begin{align}
\widetilde{X}_{n+1}  &= \widetilde{X}_{n} (a - d^{*}Z_{n})\nonumber\\
&= \widetilde{X}_{n-m} \prod_{i=n-m}^{n}(a - d^{*}Z_{i}). \label{eq:tildeproduct}
\end{align}

We will show that $\bigE[|\widetilde{X}_{n}|^{2}]$ is the minimum achievable second moment at any time $n$. Since $\bigE[|\widetilde{X}_{n}|^{2}] < \infty$ only when $a \leq  \sqrt{1 + \frac{1}{\sigma^{2}}}$, we are done once we show this.

Our approach is to inductively show that $\bigE[(X_{n} - \widetilde{X}_{n}) \widetilde{X}_{n}] = 0$ for all $n$ and for any linear control strategy applied to the system $\mathcal{S}$, from which it follows that
\[ \bigE[X_n^2] = \bigE[\widetilde{X}_n^2] + \bigE[(X_{n} - \widetilde{X}_n)^2] \ge \bigE[\widetilde{X}_n^2]. \]

Base case: $n=0$ is trivially true, since $X_{0} = \widetilde{X}_{0}$.
Assume that our hypothesis is true for $n=k$. Now, consider $n = k+1$:
\begin{align}
&\bigE[(X_{k+1}-\widetilde{X}_{k+1})\widetilde{X}_{k+1}] \nonumber \\
=~& a \bigE[(X_{k}-\widetilde{X}_{k})\widetilde{X}_{k+1}]  - \bigE\left[\left(\sum_{i=0}^{k}\alpha_{k,i}Y_{i} \right) \widetilde{X}_{k+1}\right] + \bigE[d^*\widetilde{Y}_{k}\widetilde{X}_{k+1}]. \label{eq:expectsplit}
\end{align}
We will show that all three expectations in the final expression are zero. The first term in~\eqref{eq:expectsplit} is
\begin{align*}
\bigE[(X_{k}-\widetilde{X}_{k})\widetilde{X}_{k+1}] &= \bigE[(X_{k}-\widetilde{X}_{k})(a \widetilde{X}_{k} -  d^{*} \widetilde{Y}_{k})] \\
&= - d^{*} \bigE[(X_{k}-\widetilde{X}_{k})\widetilde{Y}_{k}],
\end{align*}
by the induction hypothesis. Because $Z_k$ is independent of $X_k$ and $\widetilde{X}_k$, we may compute the above expectation as
\begin{align}
\bigE[(X_{k}-\widetilde{X}_{k})\widetilde{Y}_{k}] &= \bigE[(X_{k}-\widetilde{X}_{k})Z_{k}\widetilde{X}_{k}] \nonumber \\
&=\bigE[Z_{k}]\bigE[(X_{k}-\widetilde{X}_{k})\widetilde{X}_{k}] \nonumber \\
&=0. \label{eq:prod1}
\end{align}
To handle the second term, for each $1 \le i \le k$ we can apply~\eqref{eq:tildeproduct} to obtain
\begin{align}
\bigE[Y_i\widetilde{X}_{k+1}] =&~\bigE\left[Z_iX_i\widetilde{X}_i \prod_{j=i}^{k+1}(a - d^{*}Z_{j})\right] \nonumber\\
=&~\bigE[Z_i (a-d^{*}Z_i)]\bigE\left[X_i\widetilde{X}_i\prod_{j=i+1}^{k+1}(a - d^{*}Z_{j})\right]\nonumber \\
=&~0, \label{eq:prod2}
\end{align}
where again we have used the independence of $Z_i$ from the other
terms in the product, and $\bigE[Z_i (a - d^*Z_i)] = 0$ from the
definition of $d^*$. Finally, the last term may be computed in a
similar manner as
\begin{align}
\bigE[\widetilde{Y}_{k}\widetilde{X}_{k+1}] &= \bigE[Z_k\widetilde{X}_{k}(a \widetilde{X}_{k} - d^{*}Z_k\widetilde{X}_{k})]\nonumber \\
&=\bigE[Z_{k}(a  - d^{*}Z_{k})]\bigE[\widetilde{X}_{k}^{2}] \nonumber \\
&= 0. \label{eq:prod3}
\end{align}
by the definition of $d^{*}$.

Equations \eqref{eq:prod1}, \eqref{eq:prod2}, and \eqref{eq:prod3}, establish that all three terms in \eqref{eq:expectsplit} are zero. Hence, $\bigE[(X_{n} - \widetilde{X}_{n}) \widetilde{X}_{n}] = 0$ for all $n$, and we are done. \qedhere

\end{proof}

A similar analysis illustrates the limitations of linear strategies when $\bigE Z_n = 0$, in contrast with nonlinear strategies to be
described in the next section.

\begin{thm} \label{thm:zeromean}
  Suppose that instead of $\bigE Z_n = 1$, we have $\bigE Z_n =
  0$. Then, for all $a > 1$, the system $\mathcal{S}_{a}$ in~\eqref{eq:systema} cannot be second-moment
  stabilized using a linear strategy. In other words, linear
  strategies cannot tolerate any growth in the system.
\end{thm}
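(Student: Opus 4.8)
The plan is to adapt the ``genie''/parallel-system argument used in the proof of Theorem~\ref{thm:greedy}. The crucial observation is that the gain $d^{*}$ there was chosen precisely so that $\bigE[Z_n(a - d^{*}Z_n)] = 0$; when $\bigE Z_n = 0$ this identity forces $d^{*} = 0$ (since $\sigma^{2} = \mathrm{Var}(Z_n) = \bigE Z_n^{2} > 0$). So the natural comparison system is simply the \emph{uncontrolled} one, and the whole point will be that even doing nothing beats any linear strategy here.

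Concretely, I would fix an arbitrary adapted linear strategy, written as $U_n = \sum_{i=0}^{n}\alpha_{n,i}Y_i$, and define a parallel process by $\widetilde{X}_0 = X_0$ and $\widetilde{X}_{n+1} = a\widetilde{X}_n$, driven by the same noises $Z_n$. Then $\widetilde{X}_n = a^{n}X_0$, so $\bigE[\widetilde{X}_n^{2}] = a^{2n}$, which diverges for $a > 1$. (We may assume $\bigE[X_n^{2}] < \infty$ for every $n$, since otherwise there is nothing to prove; this also guarantees that all cross-moments appearing below are finite by Cauchy--Schwarz.)

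Next I would prove, by induction on $n$ exactly as in Theorem~\ref{thm:greedy}, that $\bigE[(X_n - \widetilde{X}_n)\widetilde{X}_n] = 0$. The base case $n = 0$ is immediate. For the inductive step, expanding $X_{k+1} = aX_k - \sum_{i=0}^{k}\alpha_{k,i}Y_i$ and $\widetilde{X}_{k+1} = a\widetilde{X}_k$ gives
\[
\bigE[(X_{k+1}-\widetilde{X}_{k+1})\widetilde{X}_{k+1}] = a^{2}\,\bigE[(X_k-\widetilde{X}_k)\widetilde{X}_k] - a\sum_{i=0}^{k}\alpha_{k,i}\,\bigE[Y_i\widetilde{X}_k];
\]
the first term vanishes by the inductive hypothesis, and for the second I would use $\widetilde{X}_k = a^{k-i}\widetilde{X}_i$ together with the independence of $Z_i$ from $(X_i,\widetilde{X}_i)$ to get $\bigE[Y_i\widetilde{X}_k] = a^{k-i}\bigE[Z_i]\bigE[X_i\widetilde{X}_i] = 0$, since $\bigE Z_i = 0$. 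Hence all terms vanish.

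Finally, from $\bigE[(X_n-\widetilde{X}_n)\widetilde{X}_n] = 0$ we obtain $\bigE[X_n^{2}] = \bigE[\widetilde{X}_n^{2}] + \bigE[(X_n-\widetilde{X}_n)^{2}] \ge \bigE[\widetilde{X}_n^{2}] = a^{2n} \to \infty$, so no linear strategy keeps the second moment bounded when $a > 1$. I do not expect a genuine obstacle here beyond what was already dealt with in Theorem~\ref{thm:greedy}: the only points needing care are recognizing that the right comparison object is the uncontrolled system (the zero-gain specialization), and checking that the moments involved are finite, which we get for free by assuming second-moment stabilizability for contradiction.
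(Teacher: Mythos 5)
Your proof is correct, but it takes a different route from the paper. You recycle the parallel-system orthogonality argument of Thm.~\ref{thm:greedy}: you observe that when $\bigE Z_n = 0$ the condition $\bigE[Z_n(a-d^*Z_n)]=0$ forces $d^*=0$, so the comparison process is the uncontrolled system $\widetilde{X}_n = a^n X_0$, and you show by induction that $\bigE[(X_n-\widetilde{X}_n)\widetilde{X}_n]=0$ for any adapted linear strategy (the key cancellation being $\bigE[Y_i\widetilde{X}_k] = a^{k-i}\bigE[Z_i]\bigE[X_i\widetilde{X}_i]=0$), whence $\bigE[X_n^2]\ge \bigE[\widetilde{X}_n^2] = a^{2n}$. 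The paper instead writes $X_n = W_nX_0$ with $W_n$ a function of $Z_0,\dots,Z_{n-1}$ alone, proves by induction that $\bigE W_n = a^n$ (again using $\bigE Z_i = 0$ and independence of $Z_i$ from $W_i$), and concludes via $\bigE[W_n^2]\ge(\bigE W_n)^2$. The two arguments are close in spirit — both say that no linear control can alter the ``mean amplification'' $a^n$ carried by $X_0$, and indeed your orthogonality relation is equivalent to $\bigE[X_nX_0]=a^n$, which is exactly what the paper's $\bigE W_n = a^n$ encodes — but they differ in packaging. Your version is conceptually appealing because it exhibits Thm.~\ref{thm:zeromean} as the $d^*=0$ specialization of Thm.~\ref{thm:greedy} (the best linear action is to do nothing, and doing anything linear is orthogonal progress at best), at the cost of assuming $\bigE[X_n^2]<\infty$ for contradiction to justify the cross-moment manipulations, which you correctly flag and which is harmless. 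The paper's representation $X_n=W_nX_0$ is slightly more self-contained: $W_n$ is a multilinear polynomial in independent noises, so all expectations are automatically finite and the lower bound $\bigE[X_n^2]\ge a^{2n}\bigE[X_0^2]$ comes out unconditionally from Jensen, with no a priori finiteness assumption needed.
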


\begin{proof}
  Suppose the system $\mathcal{S}_a$ evolves following some linear
  strategy of the form $U_n = \sum_{i = 1}^n \alpha_{n, i} Y_i$.

  We will show by induction that for each $n$, we may write $X_n =
  W_nX_0$, where $W_n$ is a function of $Z_0, Z_1, \ldots , Z_{n -
    1}$, and $\bigE W_n = a^n$. Clearly, this holds for $n = 0$ with
  $W_0 = 1$. For the inductive step, note that
  \begin{align*}
    X_{n + 1} &= a X_n - U_n = \left(aW_n - \sum_{i = 1}^n \alpha_{n,
      i} Z_iW_i \right) X_0,
  \end{align*}
  so we may take $W_{n + 1} = aW_n - \sum_{i = 1}^n \alpha_{n, i}
  Z_iW_i$. Since $Z_i$ is independent of $W_i$ for each $i$, we have
  \[ \bigE W_{n + 1} = a \bigE W_n - \sum_{i = 1}^n \alpha_{n, i} (\bigE Z_i) (\bigE W_i) = a^{n + 1}, \]
  completing the induction. It follows that
  \[ \bigE \left[X_n^2\right] = \bigE \left[W_n^2\right] \cdot \bigE \left[X_0^2\right] \ge (\bigE W_n)^2 \cdot \bigE \left[X_0^2\right] = a^{2n} \cdot \bigE \left[X_0^2\right], \]
  and so $\bigE \left[X_n^2\right]$ grows without bound when $a > 1$.
\end{proof}

Finally, the next theorem considers the weaker sense of stability of keeping the system tight, which is the sense of stability that the impossibility results in Section~\ref{sec:converse} use.

\begin{thm}
Suppose that the density function $f_Z$ of $Z_n$ is bounded, and consider linear memoryless strategies of the form $U_n = ad \cdot Y_n$ for a constant $d > 0$. Let $d^{\star} =  \mathrm{argmin}_{d}~\bigE[\log|1 - d\cdot Z_{n}|]$ and $a^{\star} = e^{- \bigE[\log|1 - d^{\star}\cdot Z_{n}|]}$. If $d = d^\star$, then the system $\mathcal{S}_{a}$ in~\eqref{eq:systema} can be kept tight (Def.~\ref{def:tight}) provided that $|a| < a^\star$. Further, no such strategy can keep the system tight if $|a| \geq a^\star$. \label{thm:tightach}
\end{thm}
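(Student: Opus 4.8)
The plan is to diagonalize the recursion and reduce ``tightness'' to a law-of-large-numbers statement for a random walk, with a central-limit refinement needed at one boundary value of $a$. Under $U_n = ad\cdot Y_n = ad\,Z_nX_n$ the state obeys $X_{n+1} = a(1 - dZ_n)X_n$, so
\[
  X_n = a^n X_0\prod_{i=0}^{n-1}(1 - dZ_i),
  \qquad
  \log|X_n| = \log|X_0| + \sum_{i=0}^{n-1}\xi_i,
  \qquad
  \xi_i := \log|a| + \log|1 - dZ_i| ,
\]
where the $\xi_i$ are i.i.d.\ with mean $m(a,d) := \log|a| + \bigE[\log|1-dZ|]$. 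A preliminary step is to check integrability of $\log|1-dZ|$: boundedness of $f_Z$ near the unique zero $z = 1/d$ gives the exponential bound $\mathbb{P}(\log|1-dZ| < -t) \le 2d^{-1}\|f_Z\|_\infty e^{-t}$, so the negative part of $\log|1-dZ|$, and of its square, is integrable, while the positive part is at most $\log(1+d) + \log(1+|Z|)$ and hence in $L^2$ because $\bigE Z^2 < \infty$. This is the one place where boundedness of $f_Z$ is genuinely used. The same estimates show that $d^\star$ exists: $g(d) := \bigE[\log|1-dZ|]$ is continuous, tends to $+\infty$ as $d\to\infty$, and by Jensen's inequality $g(d)\le\tfrac12\log(1 - 2d + d^2(1+\sigma^2))$, which at $d = \tfrac1{1+\sigma^2}$ equals $-\log a^*$ with $a^* = \sqrt{1+1/\sigma^2}$ as in Prop.~\ref{thm:secmomach}; thus the minimum defining $d^\star$ is attained, it is negative, and $a^\star\ge a^* > 1$.

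For the achievability half, take $d = d^\star$, so $m(a,d^\star) = \log|a| - \log a^\star$. If $|a| < a^\star$ this is negative, and the strong law of large numbers gives $\tfrac1n\log|X_n|\to m(a,d^\star) < 0$ almost surely, hence $X_n\to 0$ almost surely; in particular $\mathbb{P}(|X_n| < M)\to 1$ for every $M > 0$, so for each $\epsilon$ one may take $M_\epsilon = 1$ and $N_\epsilon$ large and satisfy Def.~\ref{def:tight}.

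For the converse, fix any $d > 0$ and suppose $|a|\ge a^\star$. By the definition of $d^\star$ and $a^\star$ we have $g(d)\ge g(d^\star) = -\log a^\star$, so $m(a,d)\ge 0$. If $m(a,d) > 0$, the strong law gives $|X_n|\to\infty$ almost surely, hence $\mathbb{P}(|X_n| < M)\to 0$ for every $M$. The delicate case is $m(a,d) = 0$ --- which forces $d = d^\star$ and $|a| = a^\star$ --- since there almost-sure asymptotics no longer decide the question. In that case $\log|X_n| = \log|X_0| + T_n$ with $T_n = \sum_{i=0}^{n-1}\xi_i$ a centered random walk whose increments have finite variance (by the preliminary step) that is strictly positive because $Z$ has a density and $d > 0$. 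Conditioning on $X_0$, which is independent of the $Z_i$ and almost surely nonzero, the threshold $\log M - \log|X_0|$ stays fixed while $\sqrt n\to\infty$, so the central limit theorem gives $\mathbb{P}(|X_n| < M\mid X_0)\to\tfrac12$, and bounded convergence yields $\mathbb{P}(|X_n| < M)\to\tfrac12$ for every $M$. In either case $\limsup_n\mathbb{P}(|X_n| < M)\le\tfrac12$ for every $M < \infty$, so no pair $(M_\epsilon, N_\epsilon)$ with $\epsilon < \tfrac12$ can satisfy Def.~\ref{def:tight}, and the system cannot be kept tight.

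I expect the main obstacle to be twofold. First, the integrability and variance bookkeeping around the pole $z = 1/d$ of $\log|1-dz|$: it is precisely boundedness of $f_Z$ that prevents $\bigE[\log|1-dZ|]$ from being $-\infty$ and keeps the whole reduction alive. Second, the borderline growth factor $|a| = a^\star$, where excluding tightness cannot be done with an almost-sure growth rate and instead needs the central-limit fluctuation statement above; the rest is a routine application of the strong law.
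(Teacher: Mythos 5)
Your proposal is correct and follows essentially the same route as the paper: write $\log|X_n|=\log|X_0|+\sum_i(\log|a|+\log|1-dZ_i|)$, use boundedness of $f_Z$ to control the left tail of $\log|1-dZ_i|$ and get finite mean and variance, then apply the law of large numbers for $|a|<a^\star$ and the central limit theorem to rule out tightness when $|a|\ge a^\star$. The only cosmetic difference is that you split the converse into a positive-drift case (SLLN) and a zero-drift case (CLT conditioned on $X_0$), whereas the paper treats all $|a|\ge a^\star$ at once via the domination $S_n\ge\widetilde S_n$ together with the CLT; both yield the same $\limsup\le 1/2$ obstruction.
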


%%%%%%%%%%%%%
%% Zeroth moment proof
%%%%%%%%%%%%%
\begin{proof}
Applying the control law $U_{n} = a d Y_{n}$, we calculate that
\begin{align*}
X_{ n} &= a (1 - dZ_{n-1}) X_{ n-1}\\
&=a^{n} \prod_{i=0}^{n-1}(1 - dZ_{i}) X_{0}.
\end{align*}
Let $W_i = \log |1 - dZ_i|$, and let $S_n = \sum_{i = 1}^n (W_i + \log
|a|)$. Taking logarithms gives us
\begin{align*}
\log |X_{ n}| &= S_n + \log |X_{0}|.
\end{align*}
Note that
\[ \lim_{n \rightarrow \infty} \frac{1}{\sqrt{n}} \log |X_{n}| = \lim_{n \rightarrow \infty} \frac{1}{\sqrt{n}} S_n \]
almost surely, so as will be seen shortly, it suffices to analyze $S_n$.

Take $C$ to be an upper bound on the density of $Z_i$. Then, we have
\[ \mathbb{P}(W_i < -t) = \mathbb{P}(|1 - dZ_i| < e^{-t}) \le \frac{C \cdot e^{-t}}{d}, \]
so that $W_i$ has an exponentially decaying left tail. Similarly,
\[ \mathbb{P}(W_i > t) = \mathbb{P}(|1 - dZ_i| > e^t) \le \frac{\bigE (1 - dZ_i)^2}{e^{2t}}, \]
so $W_i$ also has an exponentially decaying right tail.

Thus, $W_i$ has finite first and second moments. Let $\mu_d$ and
$\sigma_d$ denote the mean and variance of $W_i$,
respectively. Defining $\widetilde{S}_n = \sum_{i = 1}^n (W_i -
\mu_d)$, the central limit theorem gives us that
\begin{equation} \label{eq:clt}
  \frac{1}{\sqrt{n}} \widetilde{S}_n \stackrel{\mathcal{D}}{\longrightarrow} \mathcal{N}\left(0, \sigma_d^2\right) \text{ as } n \rightarrow \infty.
\end{equation}

If $|a| < a^\star$ and we take $d = d^\star$, then we see that $\log |a| < \log a^\star = -\mu_{d^\star} = -\mu_d$. Thus, there exists $\epsilon > 0$ such that $\log|a| + \mu_{d} < -2\epsilon.$ Using the union bound we then have:
\begin{align*}
\mathbb{P}\left( \log |X_{n}| \geq -n \epsilon \right) \leq \mathbb{P} \left( \log |X_{0}| > n\epsilon\right) +\mathbb{P} \left( S_{n} \geq  -2n\epsilon\right). 
\end{align*}
We have that $\mathbb{P} \left( \log |X_{0}| > n\epsilon\right) \to 0$ as $n\to \infty$, and also by the law of large numbers $\mathbb{P} \left( S_{n} \geq  -2n\epsilon\right) \to 0$ almost surely. Hence, $\mathbb{P}\left( \log |X_{n}| < -n \epsilon \right) \to 1$ and the system is kept tight.

%We then have
%\begin{align*}
%  \liminf_{n \rightarrow \infty} \mathbb{P}\left( \log |X_{n}| < -\sqrt{n} \right) &= \liminf_{n \rightarrow \infty} \mathbb{P}\left( \frac{1}{\sqrt{n}} S_n < -1 \right) \\
%  &= \liminf_{n \rightarrow \infty} \mathbb{P}\left( \frac{1}{\sqrt{n}} \widetilde{S}_n < \sqrt{n}(-\mu_d - \log |a|) - 1 \right) \\
%  &= 1
%\end{align*}
%by \eqref{eq:clt} and the fact that $-\mu_d - \log |a| > 0$. Thus, in this case the system is kept tight.

On the other hand, suppose that $|a| \ge a^\star$. Then, we have $\log |a| \ge \log a^\star = -\mu_{d^\star} \ge -\mu_d$, so $S_n \ge \widetilde{S}_n$. Consider $\delta > 0$. For $n$ large enough we have that:
\begin{align*}
&\mathbb{P}\left( \log |X_{n}| \leq n^{\frac{1}{4}} \right) \leq \mathbb{P}\left( \log |X_{n} \leq \delta \sqrt{n} \right)\\
& \leq \mathbb{P}\left( \log |X_{0}| \leq -\delta \sqrt{n} \right) +  \mathbb{P}\left( S_{n}  \leq 2\delta \sqrt{n} \right)\\
& \leq \mathbb{P}\left( \log |X_{0}| \leq -\delta \sqrt{n} \right) +  \mathbb{P}\left( \widetilde{S}_{n}  \leq 2\delta \sqrt{n} \right),
\end{align*}
where we used the union bound and the fact that $S_n \ge \widetilde{S}_n$ to get the two inequalities.
Now, $\mathbb{P}\left( \log |X_{0}| \leq -\delta \sqrt{n} \right) \to 0$ as $n \to \infty$ and $\mathbb{P}\left( \widetilde{S}_{n}  \leq 2\delta \sqrt{n} \right) \to \Phi\left( \frac{2\delta}{\sigma_{d}}\right)$, by~\eqref{eq:clt}.
Hence, 
$$\limsup_{n\to\infty} \mathbb{P}\left( \log |X_{n}| \leq n^{\frac{1}{4}} \right) \leq \Phi\left( \frac{2\delta}{\sigma_{d}}\right),$$
which gives that
$$\limsup_{n\to\infty} \mathbb{P}\left( \log |X_{n}| > n^{\frac{1}{4}} \right) \geq \frac{1}{2}.$$
Thus, in this case the system is not kept tight.
%We then have
%\begin{align*}
%  \liminf_{n \rightarrow \infty} \mathbb{P}\left(\log |X_{n}| > n^{1/4} \right) &= \liminf_{n \rightarrow \infty} \mathbb{P}\left( \frac{1}{\sqrt{n}} S_n > n^{-1/4} \right) \\
%  &\ge \liminf_{n \rightarrow \infty} \mathbb{P}\left( \frac{1}{\sqrt{n}} \widetilde{S}_n > n^{-1/4} \right) \\
%  &= \frac{1}{2} > 0,
%\end{align*}
%where again we use \eqref{eq:clt}. This shows that in this case, the
%system is not kept tight.\qedhere

\end{proof}
%%%%%%%%%%%%%
%% End Zeroth moment achievability
%%%%%%%%%%%%%

%%%%%%%%%%%%%%%%%%
%% Non-linear schemes
%%%%%%%%%%%%%%%%%%
\section{Non-linear schemes} \label{sec:nonlinear-schemes}

In the previous section, we focused on linear strategies, where $U_n$
is taken to be a linear combination of $Y_i$ for $0 \le i \le n$. We
now consider whether more general strategies can do
better. Thm.~\ref{thm:nonlinimprovement} shows that when $Z_n$ is
Gaussian, a perturbation of the linear strategy indeed does better in the second-moment sense. 
(The same result should hold for rather general
  $Z_n$; see Remark~\ref{rem:nonlinimprovement}.)
In the setting where $\bigE Z_n = 0$, Thm.~\ref{thm:nonlinimprovement2}
exhibits a nonlinear strategy that achieves a non-trivial growth factor $a >
1$. This contrasts with Thm.~\ref{thm:zeromean}, which showed that
linear strategies cannot achieve any gain in this setting. In both
Thm.~\ref{thm:nonlinimprovement} and
Thm.~\ref{thm:nonlinimprovement2}, improvement is achieved by taking
into account information from the previous round while choosing the
control.

On the other hand, Thm.~\ref{thm:nocontraction} shows that when $a >
a^* = \sqrt{1 + \frac{1}{\sigma^2}}$, for any memoryless strategy (in
the sense that $U_n$ is a function of only $Y_n$), we cannot guarantee
for all distributions of $X_n$ that $\bigE \left[X_{n + 1}^2\right] \le \bigE
\left[X_n^2\right]$. This suggests that in the memoryless setting, the linear
strategy from the previous section may be optimal. However, it does
not rule out the possibility for an increase in second moment after
one round to be compensated by a larger decrease later.

{\thm{Let $a^* = \sqrt{1 + \frac{1}{\sigma^2}}$ be as in
    Prop.~\ref{thm:secmomach}. Suppose that our multiplicative noise
    $Z_n$ has a Gaussian law $Z_n \sim \mathcal{N}(1,
    \sigma^2)$. Then, there exists $a > a^*$ for which a (non-linear)
    controller can stabilize the system in a second-moment
    sense. \label{thm:nonlinimprovement}} }

We first establish an elementary inequality for Gaussian variables. In
what follows, we define the signum function $\sgn(x)$ to be $1$ if $x
\ge 0$ and $-1$ otherwise.

\begin{lem} \label{lem:gaussian-sgn-bound}
  Let $Z \sim \mathcal{N}(1, \sigma^2)$, with $\sigma > 0$. We have
  \[ \bigE\left[ \text{\normalfont sgn}(Z) \left( 1 - \frac{Z}{1 + \sigma^2} \right) \right] > 0. \]
\end{lem}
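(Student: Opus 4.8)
The plan is to reduce the claimed inequality to an elementary monotonicity fact for the standard normal density. Write $Z = 1 + \sigma G$ with $G \sim \mathcal{N}(0,1)$, and let $\Phi$ and $\varphi$ denote the standard normal distribution function and density. Since $\sgn(Z)\,Z = |Z|$, the quantity in question is
\[ \bigE[\sgn(Z)] - \frac{\bigE|Z|}{1 + \sigma^2}. \]
First I would compute the two expectations explicitly: $\bigE[\sgn(Z)] = \mathbb{P}(G > -1/\sigma) - \mathbb{P}(G < -1/\sigma) = 2\Phi(1/\sigma) - 1$, and, by the standard formula for the mean of a folded normal (a one-line integral, splitting the range of $G$ at $-1/\sigma$), $\bigE|Z| = \bigE|1 + \sigma G| = \left(2\Phi(1/\sigma) - 1\right) + 2\sigma\varphi(1/\sigma)$.

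Substituting these and clearing the positive denominator, the desired strict inequality becomes $\sigma^2\left(2\Phi(1/\sigma) - 1\right) > 2\sigma\varphi(1/\sigma)$, equivalently, writing $t = 1/\sigma > 0$,
\[ 2\Phi(t) - 1 > 2t\varphi(t). \]
Since $2\Phi(t) - 1 = \mathbb{P}(|G| \le t) = 2\int_0^t \varphi(g)\,dg$, this is exactly $\int_0^t \varphi(g)\,dg > t\varphi(t)$, which holds because $\varphi$ is strictly decreasing on $[0,\infty)$: one has $\varphi(g) > \varphi(t)$ for every $g \in [0,t)$, so the integral strictly exceeds $\int_0^t \varphi(t)\,dg = t\varphi(t)$.

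There is no real obstacle here; the only computational step is the folded-normal mean, and even that can be bypassed by Gaussian integration by parts (Stein's identity), which gives $\bigE[(Z - 1)\sgn(Z)] = 2\sigma^2 f_Z(0)$ and rewrites the target expression as $\frac{\sigma^2}{1+\sigma^2}\bigl(\bigE[\sgn(Z)] - 2 f_Z(0)\bigr)$, leading to the same inequality $2\Phi(t) - 1 > 2t\varphi(t)$. Either route reduces the lemma to the monotonicity of $\varphi$ on $[0,\infty)$, so I would present whichever is shorter.
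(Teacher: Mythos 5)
Your proof is correct, and it ultimately rests on the same pivotal scalar inequality as the paper: your $2\Phi(t) - 1 > 2t\varphi(t)$ with $t = 1/\sigma$ is exactly the paper's $\int_0^s \gamma(x)\,dx > s\int_s^\infty x\gamma(x)\,dx$, since $\int_s^\infty x\gamma(x)\,dx = \gamma(s)$. The routes to and from that inequality differ, though. The paper reduces the lemma to it by writing $\mathbbm{1}_{\widetilde Z \ge s} = \tfrac12(1-\sgn(Z))$ and using $\bigE[1-Z]=0$, whereas you compute $\bigE[\sgn(Z)] = 2\Phi(1/\sigma)-1$ and the folded-normal mean $\bigE|Z| = (2\Phi(1/\sigma)-1) + 2\sigma\varphi(1/\sigma)$ and clear the positive denominator; these reductions are equivalent in content, and your Stein-identity aside is a fine optional shortcut (it uses the distributional derivative of $\sgn$, but your explicit computation already makes the argument self-contained). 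Where you genuinely simplify is the proof of the key inequality itself: the paper lower-bounds the Gaussian density by $\frac{1}{\sqrt{2\pi}}\max\bigl(1-\tfrac{x^2}{2},0\bigr)$ and splits into the cases $s\le\sqrt2$ and $s>\sqrt2$ with polynomial estimates of $e^{-s^2/2}$, while you observe that $\int_0^t \varphi(g)\,dg > t\varphi(t)$ follows immediately from $\varphi$ being strictly decreasing on $[0,\infty)$ — shorter, case-free, and it makes clear that the only property of the density used at this step is its monotonicity on the half-line.
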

\begin{proof}
  It is convenient to write $Z = 1 - \sigma \widetilde{Z}$, where $\widetilde{Z} \sim \mathcal{N}(0, 1)$. Let $s = \frac{1}{\sigma}$, and let $\gamma$ denote the standard Gaussian density. Note that
  \[ \gamma(x) \ge \frac{1}{\sqrt{2 \pi}} \max\left(1 - \frac{x^2}{2}, \;0 \right) \]
  for all $x$. Hence,
  \begin{align}
    \sqrt{2 \pi} \int_0^s \gamma(x) \,dx &\ge \int_0^s \max\left(1 - \frac{x^2}{2}, \;0 \right) \,dx \nonumber \\
    &= \begin{cases}
      s - \frac{s^3}{6} &\text{if $s \le \sqrt{2}$}\\
      \frac{2\sqrt{2}}{3} &\text{if $s > \sqrt{2}$}
    \end{cases} \label{eq:gamma-bound-1}
  \end{align}
  We also have
  \begin{equation} \label{eq:gamma-bound-2}
    s \sqrt{2 \pi} \int_s^\infty x \gamma(x) \,dx = s \int_s^\infty x e^{-\frac{x^2}{2}} \,dx = s e^{-\frac{s^2}{2}}.
  \end{equation}
  % Maybe some details should be added here?
  It can be checked by elementary calculations that for any $s > 0$, \eqref{eq:gamma-bound-1} is always strictly greater than \eqref{eq:gamma-bound-2}. Indeed, for $s < \sqrt{2}$ use 
  $e^{-\frac{s^{2}}{2}} < 1 - \frac{s^{2}}{2} + \frac{s^{4}}{8} < 1 - \frac{s^{2}}{6},$
  and for $s >\sqrt{2}$ note that $se^{-\frac{s^{2}}{2}}$ is decaying. Thus,
  \[ \frac{1}{2} - \bigE[ \mathbbm{1}_{\widetilde{Z} \ge s} ] = \int_0^s \gamma(x) \,dx > s \int_s^\infty x \gamma(x) \,dx = s \cdot \bigE \left[ \mathbbm{1}_{\widetilde{Z} \ge s} \cdot \widetilde{Z} \right] \]
  Let us rewrite the above equation in terms of $Z$ and $\sigma$, noting that \\ \mbox{$\mathbbm{1}_{\widetilde{Z} \ge s} = \frac{1}{2}(1 - \text{sgn}(Z))$}. We obtain
  \[ \bigE[\text{sgn}(Z)] > \frac{1}{\sigma} \cdot \bigE \left[ (1 - \text{sgn}(Z)) \cdot \frac{1 - Z}{\sigma} \right]. \]
  Rearranging, we have
  \[ \bigE\left[ \text{sgn}(Z) \left( 1 + \frac{1 - Z}{\sigma^2} \right) \right] > \frac{1}{\sigma^2} \bigE[1 - Z] = 0. \]
  Finally, multiplying both sides by $\frac{1 + \sigma^2}{\sigma^2}$ yields
  \[ \bigE\left[ \text{sgn}(Z) \left( 1 - \frac{Z}{1 + \sigma^2} \right) \right] > 0. \hfill\qedhere \] 
\end{proof}

\begin{proof}[Proof of Thm.~\ref{thm:nonlinimprovement}]
  To show second-moment stability, it suffices to exhibit controls $U_n$ and $U_{n +
    1}$ which ensure that $\bigE X_{n + 2}^2 \le \bigE X_n^2$ for all
  possible distributions of $X_n$. For a positive $\epsilon$ to be
  specified later, choose
  \[ a = (1 + \epsilon^2)a^* = (1 + \epsilon^2)\sqrt{1 + \frac{1}{\sigma^2}}. \]
  For our controls, we take
  \[ U_n = \frac{a}{1 + \sigma^{2}}Y_n \quad\text{and}\quad U_{n + 1} = \frac{a}{1 + \sigma^{2}} Y_{n + 1} + \epsilon Y_{n + 1} \cdot \left| \frac{Y_n}{Y_{n + 1}} \right|. \]
  Note that the expression for $U_n$ and the first term in the
  expression for $U_{n + 1}$ are the same as in the linear strategy
  from Prop.~\ref{thm:secmomach}. However, here we have added a small
  perturbation to $U_{n + 1}$. For convenience, define the function
  $g(x) = 1 - \frac{x}{1 + \sigma^2}$. Then,
  \begin{align}
    X_{n + 1} &= a \cdot g(Z_n) X_n. \nonumber \\
    X_{n + 2} &= a \cdot g(Z_{n + 1}) X_{n + 1} - \epsilon \cdot aZ_{n + 1}g(Z_n) \cdot \left| \frac{Z_n}{a Z_{n+1} g(Z_n)} \right| X_n \nonumber \\
    &= a^2 \cdot g(Z_{n+1})g(Z_n) X_n - \epsilon \cdot \text{sgn}(Z_{n+1}) \cdot g(Z_n) \left|\frac{Z_n}{g(Z_n)}\right| X_n. \label{eq:perturbedlinear}
  \end{align}
  We will compute the second moment of \eqref{eq:perturbedlinear}. Let
  \[ A = g(Z_{n+1})g(Z_n), \qquad B = \text{sgn}(Z_{n+1}) \cdot g(Z_n) \left|\frac{Z_n}{g(Z_n)}\right|. \]
  Then, we have
  \begin{align*}
    \bigE [A^2] &= \bigE[g(Z_{n+1})] \cdot \bigE[g(Z_n)] = \frac{\sigma^4}{(1 + \sigma^2)^2} \\
    \bigE [B^2] &= \bigE Z_n^2 = 1 + \sigma^2 \\
    \bigE [AB] &= \bigE \left[ g(Z_n)^2 \cdot \left| \frac{Z_n}{g(Z_n)} \right| \right] \cdot \bigE \Big[ \sgn(Z_{n + 1}) g(Z_{n + 1}) \Big] > 0,
  \end{align*}
  where the inequality in the last line follows from Lemma
  \ref{lem:gaussian-sgn-bound} and the fact that $g(Z_n)^2 \cdot
  \left| \frac{Z_n}{g(Z_n)} \right|$ is almost surely positive.

  Recall that the $Z_n$ and $Z_{n+1}$ are both independent of $X_n$, so taking second-moments in~\eqref{eq:perturbedlinear}, we have
  \begin{align*}
    \bigE X_{n + 2}^2 &= \left( a^4 \cdot \bigE A^2 - 2 \epsilon a^2 \cdot \bigE AB + \epsilon^2 \cdot \bigE B^2 \right) \bigE X_n^2 \nonumber \\
    &= \left( \frac{a^4 \sigma^4}{(1 + \sigma^2)^2} - 2 \epsilon a^2 \cdot \bigE AB + O(\epsilon^2) \right) \bigE X_n^2 \nonumber \\
   &= \Big( (1 + \epsilon^2)^4 - \frac{2 \epsilon (1 + \epsilon^2)(1 + \sigma^2)}{\sigma^2} \cdot \bigE AB + O(\epsilon^2) \Big) \bigE X_n^2 \nonumber \\
   & = \left[ 1 - \epsilon \cdot \frac{2 (1 + \sigma^2)}{\sigma^2} \bigE AB + O(\epsilon^2) \right] \bigE X_n^2.
  \end{align*}
  Since $\bigE AB > 0$, when $\epsilon$ is a sufficiently small
  positive number, this gives $\bigE X_{n + 2}^2 \le \bigE X_n^2$,
  showing second-moment stability.
\end{proof}

{\begin{remark} We actually suspect that
    Thm.~\ref{thm:nonlinimprovement} applies to all continuous distributions of
    $Z_n$. Indeed, the above analysis can be carried out for a more
    general class of control strategies. Consider instead
    \[ U_n = \frac{a}{1 + \sigma^{2}}Y_n \quad\text{and}\quad U_{n + 1} = \frac{a}{1 + \sigma^{2}} Y_{n + 1} + \epsilon Y_{n + 1} \cdot h\left(\frac{Y_n}{Y_{n + 1}}\right), \]
    where $h$ is any function (above, we used $h(x) = |x|$). Then, we
    would carry out the same analysis except with
    \[ B = aZ_{n + 1} g(Z_n) \cdot h \left( \frac{Z_n}{aZ_{n + 1}g(Z_n)} \right). \]
    The crucial properties we needed were that $\bigE B^2 < \infty$
    and $\bigE AB \ne 0$. Thus, for all distributions of $Z_n$, as long
    as there exists some function $h$ verifying those two properties,
    the conclusion of Thm.~\ref{thm:nonlinimprovement} applies.
\end{remark} \label{rem:nonlinimprovement}}

The next theorem shows that a perturbation can also improve upon
linear strategies when $\bigE Z_n = 0$.

{\thm{Suppose that instead of $\bigE Z_n = 1$, we have $\bigE Z_n =
    0$. Then, as long as $Z_n$ has finite second moment, there exists
    $a > 1$ for which a (non-linear) controller can stabilize the
    system in a second-moment sense. \label{thm:nonlinimprovement2}} }

We first prove a technical lemma.

\begin{lem} \label{lem:epsilon0}
  Let $Z$ be a random variable with $\bigE Z = 0$ and finite first
  moment. Then, for all sufficiently small $\epsilon > 0$, we have
  \[ \bigE \left[ Z \left| \frac{\epsilon}{Z} - 1 \right| \right] < 0. \]
\end{lem}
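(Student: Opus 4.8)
The plan is to rewrite the integrand in terms of the sign of $Z$. For $Z \ne 0$ one has $Z\bigl|\tfrac{\epsilon}{Z} - 1\bigr| = \sgn(Z)\,|\epsilon - Z|$, so (using that $Z$ has no atom at $0$, as is the case in the intended application) it suffices to show $\bigE\bigl[\sgn(Z)\,|\epsilon - Z|\bigr] < 0$ for all sufficiently small $\epsilon > 0$. The key observation is that as $\epsilon \downarrow 0$ this expectation tends to $\bigE[\sgn(Z)\,|Z|] = \bigE Z = 0$, so the whole content of the lemma is to extract the first-order term in $\epsilon$ and check that it is negative.

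Concretely, I would split according to the three events $\{Z > \epsilon\}$, $\{0 < Z < \epsilon\}$, $\{Z < 0\}$ (the boundary $\{Z=0\}\cup\{Z=\epsilon\}$ being null). On $\{Z>\epsilon\}$ and on $\{Z<0\}$ the integrand equals $Z-\epsilon$, while on $\{0<Z<\epsilon\}$ it equals $\epsilon-Z$. Using the pointwise identity $\mathbbm{1}_{Z>\epsilon} + \mathbbm{1}_{Z<0} = 1 - \mathbbm{1}_{0< Z< \epsilon}$ (a.s.), this gives $\sgn(Z)|\epsilon-Z| = (Z-\epsilon) + 2(\epsilon-Z)\mathbbm{1}_{0< Z< \epsilon}$, hence
\[ \bigE\bigl[\sgn(Z)\,|\epsilon-Z|\bigr] = \bigE[Z-\epsilon] + 2\,\bigE\bigl[(\epsilon - Z)\,\mathbbm{1}_{0 < Z < \epsilon}\bigr] = -\epsilon + 2\,\bigE\bigl[(\epsilon - Z)\,\mathbbm{1}_{0 < Z < \epsilon}\bigr], \]
where the last equality uses $\bigE Z = 0$; finiteness of $\bigE|Z|$ is used only to guarantee every term is integrable, via $\bigl|\sgn(Z)|\epsilon - Z|\bigr| \le \epsilon + |Z|$.

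It then remains to bound the correction term. Since $0 \le (\epsilon - Z)\mathbbm{1}_{0<Z<\epsilon} \le \epsilon\,\mathbbm{1}_{0<Z<\epsilon}$, we obtain $\bigE\bigl[\sgn(Z)\,|\epsilon-Z|\bigr] \le -\epsilon\bigl(1 - 2\,\mathbb{P}(0 < Z < \epsilon)\bigr)$. The events $\{0 < Z < \epsilon\}$ decrease to $\emptyset$ as $\epsilon \downarrow 0$, so continuity of measure gives $\mathbb{P}(0 < Z < \epsilon) \to 0$; thus for $\epsilon$ small enough the right-hand side is strictly negative, proving the lemma.

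I do not expect a genuine obstacle here: the argument is essentially sign bookkeeping, and the only subtlety is recognizing that the naive limit is $\bigE Z = 0$ so that one must track the $O(\epsilon)$ term, together with the elementary fact that the contribution of $\{0<Z<\epsilon\}$ is $o(\epsilon)$. If one wished to avoid the tacit no-atom-at-$0$ assumption, one would interpret the integrand as $0$ on $\{Z=0\}$, and the same computation goes through with $-\epsilon$ replaced by $-\epsilon\bigl(1-\mathbb{P}(Z=0)\bigr)$, still negative whenever $Z$ is not almost surely zero.
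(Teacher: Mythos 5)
Your argument is correct, but it takes a different route from the paper. The paper treats $F(t) = \bigE\bigl[Z\,|t/Z - 1|\bigr]$ as a function of the perturbation parameter and differentiates it at $t=0$: it checks that the difference quotients $\frac{f(x,t)-f(x,0)}{t}$ are bounded by $1$ and converge pointwise to $-1$ for $x \ne 0$, then invokes dominated convergence to get $F'(0) = -1$, whence $F(t)<0$ for small $t$ since $F(0)=\bigE Z = 0$. You instead prove the exact pointwise identity $\sgn(Z)\,|\epsilon - Z| = (Z-\epsilon) + 2(\epsilon - Z)\mathbbm{1}_{0<Z<\epsilon}$ and take expectations, arriving at the explicit bound $\bigE\bigl[Z\,|\epsilon/Z - 1|\bigr] \le -\epsilon\bigl(1 - 2\,\mathbb{P}(0<Z<\epsilon)\bigr)$, with only continuity of measure needed to kill the correction term. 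The two proofs isolate the same first-order behavior (the $-\epsilon$ term, with the contribution of $\{0<Z<\epsilon\}$ being $o(\epsilon)$), but yours is more elementary (no dominated convergence), quantitative (an explicit threshold in terms of $\mathbb{P}(0<Z<\epsilon)$), and, as you note, extends painlessly to distributions with an atom at $0$, which the paper's statement tacitly excludes; the paper's version is slightly slicker to write down and generalizes more readily if one wanted higher-order information about $F$ near $0$. One cosmetic remark: you do not actually need $\{Z=\epsilon\}$ to be null, since your identity holds there as well (both sides vanish), so the parenthetical about that boundary set can be dropped.
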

\begin{proof}
  For $0 \le t \le \frac{1}{2}$, define the function
  \[ f(x, t) = x \left| \frac{t}{x} - 1 \right|. \]
  Note that for each $x \ne 0$ and each $t$, we have
  \[ \left|\frac{f(x, t) - f(x, 0)}{t}\right| \le 1 \quad\text{and}\quad \lim_{t \rightarrow 0} \frac{f(x, t) - f(x, 0)}{t} = -1. \]
  Thus, letting $F(t) = \bigE f(Z, t)$, the dominated convergence theorem implies
  \[ \lim_{t \rightarrow 0} \frac{F(t) - F(0)}{t} = -1. \]
  Consequently, for all sufficiently small $t$, we have $F(t) < 0$, as
  desired.
\end{proof}

\begin{proof}[Proof of Thm.~\ref{thm:nonlinimprovement2}]
  We take an approach similar to the proof of
  Thm.~\ref{thm:nonlinimprovement}. Again, it suffices to exhibit
  controls $U_n$ and $U_{n + 1}$ which ensure that $\bigE \left[ X_{n + 2}^2\right]
  \le \bigE \left[X_n^2\right]$ for all possible distributions of $X_n$. By Lemma
  \ref{lem:epsilon0}, take a small enough $\epsilon_0 > 0$ so that
  \begin{equation} \label{eq:epsilon0-def}
    \bigE \left[ Z_n \left| \frac{\epsilon_0}{Z_n} - 1 \right| \right] < 0.
  \end{equation}
  Let $\epsilon > 0$ be another small number to be specified later,
  and take $a = 1 + \epsilon^2$. For our controls, we take
  \begin{align*}
    U_n &= a \epsilon_0^{-1} Y_n \\
    U_{n + 1} &= -a^2 \epsilon_0^{-1} Y_n - \epsilon Y_n \cdot \left| \frac{Y_{n + 1}}{Y_n} \right|.
  \end{align*}
  Then,
  \begin{align*}
    X_{n + 1} &= a X_n - a\epsilon_0^{-1} Z_n X_n. \nonumber \\
    X_{n + 2} &= a X_{n + 1} + a^2 \epsilon_0^{-1} Y_n + \epsilon Y_n \cdot \left| \frac{Y_{n + 1}}{Y_n} \right| \\
    &= a^2 X_n + \epsilon Z_n \cdot \left| \frac{aZ_{n + 1}(1 -  \epsilon_0^{-1} Z_n)}{Z_n} \right| X_n \\
    &= a^2 X_n + a \epsilon_0^{-1} \epsilon \cdot |Z_{n + 1}| \cdot Z_n \left| \frac{\epsilon_0}{Z_n} - 1 \right| X_n
  \end{align*}
  For convenience, let $A = \epsilon_0^{-1} \cdot |Z_{n + 1}| \cdot
  Z_n \left| \frac{\epsilon_0}{Z_n} - 1 \right|$, and note that $\bigE
  A^2 < \infty$ since $Z_n$ and $Z_{n + 1}$ have finite second
  moments. Substituting this definition for $A$, we calculate
  \begin{align*}
    \bigE \left[X_{n + 2}^2\right] &= a^2 \cdot \bigE (a + \epsilon A)^2 \cdot \bigE \left[X_n^2\right] \\
    &=(1 + \epsilon^{2})^2 \cdot \bigE (1 + \epsilon^{2} + \epsilon A)^2 \cdot \bigE \left[X_n^2\right] \\
    &= \left(1 + 2 \epsilon \cdot \bigE A + O(\epsilon^2)\right) \bigE \left[X_n^2\right].
  \end{align*}
  By \eqref{eq:epsilon0-def}, we have that $\bigE A$ is strictly
  negative. Thus, for small enough positive $\epsilon$, we obtain
  $\bigE \left[X_{n + 2}^2\right] \le \bigE \left[X_n^2\right]$, as desired.
\end{proof}

The next theorem pertains to schemes of the form $U_n = h(Y_n)$, where
$h: \mathbb{R} \to \mathbb{R}$ is \emph{any} fixed function.

{\thm{ Consider any $a > a^* = \sqrt{1 + \frac{1}{\sigma^2}}$ and any
    measurable function $h: \mathbb{R} \to \mathbb{R}$. Then, there exists a
    random variable $X$ with finite second moment for which
    \[ \bigE \left[a^2(X - h(XZ_n))^2\right] > \bigE X^2. \]
    In particular, we cannot guarantee $\bigE X_{n + 1}^2 \le \bigE
    X_n^2$ for the scheme $U_n = h(Y_n)$.
    \label{thm:nocontraction} }}
\begin{proof}
Let $M$ be a large parameter to be specified later. Consider the
  probability density
  \[ \rho(x) = \begin{cases}
     \left( 1 - \frac{1}{M^2} \right)^{-1} |x|^{-3} & \text{if $1 \le |x| \le M$}, \\
     0 & \text{otherwise}.
  \end{cases}. \]
  We will take $X$ to have density $\rho$, and for appropriate $M$, we will find that
  \[ \bigE \left[a^2(X - h(XZ_n))^2 \right] > \bigE X^2. \]

  Recall our notation $f_{Z}(x) = e^{-\phi(x)}$ for the density of $Z_n$. To aid in our calculations, for each integer $k \ge 0$ and real number $y \ne 0$, we consider the quantity
  \begin{align*}
    \alpha_k(y) &= \int_{-\infty}^\infty \frac{x^k \rho(x) f_{Z}(y/x)}{|x|} \,dx \\
    &= \int_1^M \frac{x^k \rho(x)f_{Z}(y/x) + (-x)^k \rho(-x)f_{Z}(-y/x)}{|x|} \,dx \\
    &= \left( 1 - \frac{1}{M^2} \right)^{-1} \int_1^M \frac{x^k f_{Z}(y/x) + (-x)^k f_{Z}(-y/x)}{x^4} \,dx \\
    &= \left( 1 - \frac{1}{M^2} \right)^{-1} \int_{y}^{y/M} \frac{y^ks^{-k} f_{Z}(s) + y^k(-s)^{-k} f_{Z}(-s)}{y^4s^{-4}} \left( - \frac{y}{s^2} \right) \,ds \\
    &= \left( 1 - \frac{1}{M^2} \right)^{-1} \int_{y/M}^{y} \frac{s^{2 - k} f_{Z}(s) + (-s)^{2 - k} f_{Z}(-s)}{y^{3 - k}} \,ds,
  \end{align*}
  where we have made the substitution $x = y/s$. Let $\epsilon > 0$ be a small parameter. Consider a fixed
  $t$ with $\epsilon \le t \le 1 - \epsilon$, and set $y = \pm
  M^t$. We find that
  \begin{equation} \label{eq:alpha-limit}
    \lim_{M \rightarrow \infty} |y| y^{2 - k} \alpha_k(y) = \int_0^\infty (s^{2 - k} f_{Z}(s) + (-s)^{2 - k} f_{Z}(-s)) \,ds = \bigE \left[Z_n^{2 - k}\right]
  \end{equation}
  uniformly over $\epsilon \le t \le 1 - \epsilon$, where we have
  taken care to ensure that the above holds for both possible signs of
  $y$. Let $\delta > 0$ also be a small parameter. We now choose $M$ to be sufficiently large so that
  \[ \left( 1 - \frac{1}{M^2} \right)^{-1} \le 1 + \delta, \]
  and also for all $y$ with $M^\epsilon \le |y| \le M^{1 - \epsilon}$ (in light of~\eqref{eq:alpha-limit}),
  \begin{align} \label{eq:alpha-bound}
    \alpha_2(y) - \frac{\alpha_1(y)^2}{\alpha_0(y)} &\ge (1 - \delta) \left( \frac{1}{|y|} - \frac{|y|^{-2}y^{-2} (\bigE Z_n)^2}{|y|^{-1}y^{-2} \bigE Z_n^2} \right) \\\
    &= (1 - \delta) \frac{1}{|y|} \left( 1 - \frac{1}{1 + \sigma^2} \right) \nonumber = (1 - \delta) \frac{\sigma^2}{|y|(1 + \sigma^2)}.
  \end{align}

  We then have
  \[ \bigE X^2 = \int_{-\infty}^\infty x^2 \rho(x) \,dx \le 2 (1 + \delta) \int_1^M \frac{1}{x} \,dx = 2 (1 + \delta) \log M \]
  and
  \begin{align*}
    \bigE (X - h(XZ_n))^2 &= \int_{-\infty}^\infty \int_{-\infty}^\infty \rho(x) (x - h(xz))^2 f_{Z}(z) \,dz\,dx \\
    &= \int_{-\infty}^\infty \int_{-\infty}^\infty \frac{\rho(x) (x - h(y))^2 f_{Z}(y/x)}{|x|} \,dy\,dx \\
    &= \int_{-\infty}^\infty \Big(\alpha_2(y) - 2h(y)\alpha_1(y) + h(y)^2\alpha_0(y) \Big) \,dy.
  \end{align*}
  Note that the integrand in the last expression is a quadratic function in $h(y)$ whose minimum possible value is $\alpha_2(y) - \frac{\alpha_1(y)^2}{\alpha_0(y)}$, and note also that this quantity is non-negative since $\alpha_2(y) \alpha_0(y) \ge \alpha_1(y)^2$ by the Cauchy-Schwarz inequality. Thus,
  \begin{align*}
    \bigE (X - h(XZ_n))^2 &\ge \int_{-\infty}^\infty \left(\alpha_2(y) - \frac{\alpha_1(y)^2}{\alpha_0(y)}\right) \,dy, \\
    &\ge \int_{M^{\epsilon}}^{M^{1 - \epsilon}} \left( \alpha_2(y) - \frac{\alpha_1(y)^2}{\alpha_0(y)} + \alpha_2(-y) - \frac{\alpha_1(-y)^2}{\alpha_0(-y)} \right) \,dy \\
    &\ge \frac{2 (1 - \delta) \sigma^2}{1 + \sigma^2} \int_{M^{\epsilon}}^{M^{1 - \epsilon}} \frac{1}{y} \,dy = \frac{2(1 - \delta)(1 - 2\epsilon) \sigma^2}{1 + \sigma^2} \log M,
  \end{align*}
  where we have plugged in the bound from~\eqref{eq:alpha-bound}. Consequently,
  \[ \frac{\bigE a^2(X - h(XZ_n))^2}{\bigE X^2} \ge \frac{(1 - \delta)(1 - 2\epsilon)}{1 + \delta} \cdot \frac{a^2\sigma^2}{1 + \sigma^2}. \]
  Since $a > \sqrt{1 + \frac{1}{\sigma^2}}$, the right hand side is
  strictly greater than $1$ when $\epsilon$ and $\delta$ are sufficiently
  small. This completes the proof.
\end{proof}

%%%%%%%%%%%%%%%%%%
%% Converse
%%%%%%%%%%%%%%%%%%
\section{An impossibility result}\label{sec:converse}

{\thm{For the system $\mathcal{S}_{a}$, suppose that $\phi$ is differentiable and satisfies $|z\cdot \phi'(z)| \leq C_{1} + C_{2} \cdot \phi(z)$ for all $z$, and also $e^{-\phi(z)} \leq |z|^{-1-\delta}$ for some $\delta > 0$. We additionally assume $\phi(\cdot)$ satisfies a doubling condition on $\phi'(\cdot)$ such that if $\frac{z_{1}}{2} \leq z_{2} \leq 2z_{1},$ then $\phi'(z_{2}) \leq C_{3} \cdot \phi'(z_{1})$.

Then, there exists $a \in \mathbb{R}$, $a < \infty$ such that $\mathbb{P}(|X_{n}|< M) \rightarrow 0$ for all $M < \infty$.

\label{thm:main}}}

Note that the conditions on $\phi(\cdot)$ above imply the conditions in Thm.~\ref{thm:intro1}.

We rewrite the system $\mathcal{S}_{a}$ from~\eqref{eq:systema} here, with state denoted as $X_{a,n}$, to emphasize the dependence on $a$:
\begin{align}
\begin{split}
X_{a, n+1} &= X_{a, n} - U_{a, n},  \\
Y_{a, n} &= Z_{n} \cdot X_{a, n}. \label{eq:systema1}
\end{split}
\end{align}
Now define $U_{n} := a^{-n} U_{a,n}$, and consider the system $\mathcal{S}$, which is the system $\mathcal{S}_{a}$ scaled by $a$:
\begin{align}
\begin{split}
X_{n+1} &= X_{n} - U_{n},  \\
Y_{n} &= Z_{n} \cdot X_{n}. \label{eq:systemnoa}
\end{split}
\end{align}
The $Z_{n}$'s and the initial state $X_{0} = X_{a,0}$ are identical in both systems.  Then, the scaled system satisfies $X_{n} = a^{-n} X_{a,n}$.  Thus we have that:
%Since $\mathcal{S}_{a}$ behaves as $\mathcal{S}$ scaled by $a$, if the control applied to $\mathcal{S}_a$ is \mbox{$U_{a,n} = a^{n}U_{n}$}, we can write:
$$\mathbb{P}(|X_{a,n}| < M) = \mathbb{P}(|X_{n}|< a^{-n} M).$$

As a result it suffices bound the probability that the state of the of system $\mathcal{S}$, i.e. $|X_{n}|$, is contained in intervals that are shrinking by a factor of $a$ at each time step. The rest of this section uses the notation $X_{n}$ to refer to the state of the the system $\mathcal{S}$ and $X_{a.n}$ to refer the the state of the system $\mathcal{S}_{a}$.

 %A similar result is given in~\cite[Ch. 5, Lemma 5.4.4]{gireejabeast}.

%%%%%%%%%%%%%%%%%%%
%% Notation
%%%%%%%%%%%%%%%%%%%
\subsection{Definitions}

Let $S_{n} := \sum_{i=0}^{n-1}U_{i}$. Hence, $X_{n} = X_{0} - S_{n}$.

The goal of the controller is to have $S_{n}$ be as close to $X_{0}$ as possible. We will track the progress of the controller through intervals $I_{n}$ that contain $X_{0}$ and are decreasing in length.

Let $d(I_{n}, S) := \inf_{x\in I_{n}} |S-x|$ denote the distance of a point $S$ from the interval $I_{n}$.

\begin{defn}
For all $n \geq 0$ and for $k \in \mathbb{Z}$, there exists a unique integer $h(k)$ such that $X_{0} \in \bigr[\frac{h(k)}{2^{k}}, \frac{h(k)+1}{2^{k}}\bigr)$. Let $J(k) := \bigr[\frac{h(k)}{2^{k}}, \frac{h(k)+1}{2^{k}}\bigr)$. We now inductively define
    \[ K_{0} := \min\{k\geq0 \mid d(J(k), 0) \geq 2^{-k} \},~\textrm{and} \]
    \[ K_{n} := \min\{k \mid k > K_{n-1},~d(J(k), S_{n}) \geq 2^{-k} \}. \]
Write $H_{n} := h(K_{n})$ and $I_{n} := J(K_n) = \bigr[\frac{H_{n}}{2^{K_{n}}}, \frac{H_{n}+1}{2^{K_{n}}}\bigr)$.
\end{defn}
Let $Y_{0}^{n}$ indicate the observations $Y_{0}$ to $Y_{n}$, and let $\mathcal{F}_{n} := \{Y_{0}^{n}, K_{0}^{n}, H_{0}^{n} \}$, which is the total information available to the controller at time $n$. Let $f_{X_{n}}(x|\mathcal{F}_{n})$ be the conditional density of $X_{n}$ given $\mathcal{F}_{n}$.

%\begin{defn}
%For all $n \geq 0$ and for any $k \in \mathbb{Z}$, there exists a unique integer $h(k)$ such that $X_{0} \in [\frac{h(k)}{2^{k}}, \frac{h(k)+1}{2^{k}})$. For this $h(k)$, define $J(k) := [\frac{h(k)}{2^{k}}, \frac{h(k)+1}{2^{k}})$. We now inductively define the intervals $I_{n}$ using the intervals $J(k)$. Define
%$$K_{0} := \min\{k\geq0 \mid d(J(k), 0) \geq 2^{-k} \},~\textrm{and}$$
%$$K_{n} := \min\{k \mid k > K_{n-1},~d(J(k), S_{n}) \geq 2^{-k} \}.$$
%Define $H_{n} := h(K_{n})$ and define $I_{n} := [\frac{H_{n}}{2^{K_{n}}}, \frac{H_{n}+1}{2^{K_{n}}})$.
%\end{defn}
%%
%Finally, let $\mathcal{F}_{n} := \{Y_{0}^{n}, K_{0}^{n}, H_{0}^{n} \}$. Let $f_{X_{n}}(x|\mathcal{F}_{n})$ be the conditional density of $X_{n}$ given $\mathcal{F}_{n}$, and let $f_{X_{n}}(x|\mathcal{F}_{n-1})$ be similarly defined.

%%%%%%%% Figure
\begin{figure}[th]
   \centering
   \includegraphics[width = .6\textwidth]{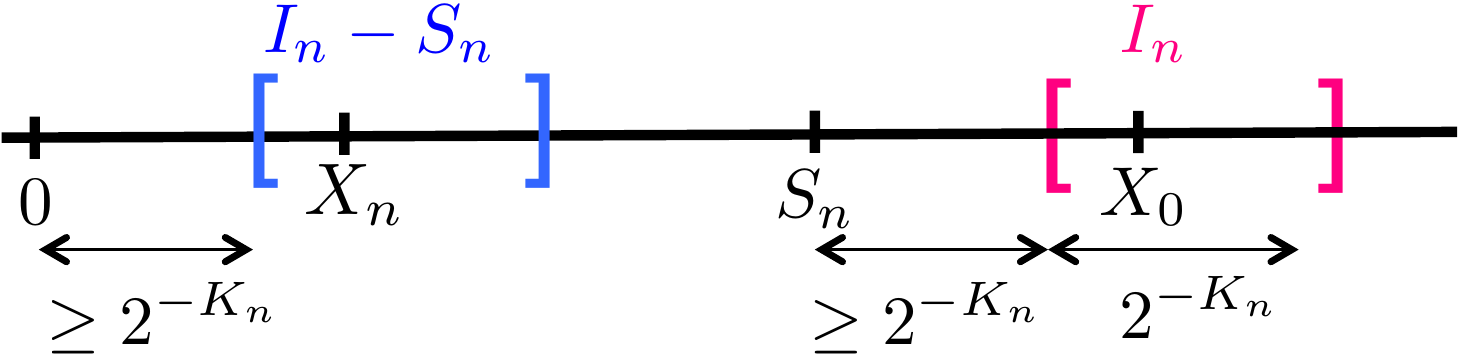} % requires the graphicx package
   \caption{A caricature illustrating the intervals $I_{n}$ and $I_{n}-S_{n}$.}
   \label{fig:intervals}
\end{figure}

%%%%%%%%%%%%%%%%%%
%% Interval Lemmas
%%%%%%%%%%%%%%%%%%
\subsection{Relationships between $I_n$, $K_n$, $S_n$, and $X_n$}

We state and prove two lemmas that will be used in the main proof. The first lemma uses $K_{n}$ to bound how fast $S_{n}$ approaches $X_{0}$.
%
%The easy proofs are deferred to Appendix~\ref{sec:appendix2}.
\begin{lem}
$$2^{-K_{n}} \leq |X_{0}-S_{n}|,$$
and if $K_{n} > K_{n-1}+1$, then
$$|X_{0}-S_{n}| \leq 2^{2-K_{n}}.$$
\label{lem:Inbound}
\end{lem}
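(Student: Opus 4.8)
The plan is to obtain both bounds straight from the definition of $K_n$ as a minimum over indices $k$; this step is purely deterministic combinatorics of the dyadic intervals $J(k)$ and uses nothing about the $Z_n$ or about the control. First I would dispose of the lower bound. Since $X_0 \in J(k)$ for every $k$, in particular $X_0 \in J(K_n) = I_n$, and the index $k = K_n$ attains the condition in the defining minimum, namely $d(J(K_n), S_n) \ge 2^{-K_n}$ (with $S_0 = 0$ when $n = 0$). Hence $|X_0 - S_n| \ge \inf_{x \in J(K_n)} |x - S_n| = d(J(K_n), S_n) \ge 2^{-K_n}$, which is the first inequality.

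For the upper bound I would invoke the \emph{minimality} of $K_n$. Assuming $K_n > K_{n-1} + 1$, the index $k = K_n - 1$ satisfies $k > K_{n-1}$ yet $k < K_n$, so it must fail the condition defining $K_n$; that is, $d(J(K_n - 1), S_n) < 2^{-(K_n - 1)} = 2^{1 - K_n}$. Choosing $x^\star \in \overline{J(K_n - 1)}$ that realizes this distance, and using that $X_0 \in J(K_n - 1)$, an interval of diameter $2^{-(K_n - 1)} = 2^{1 - K_n}$, the triangle inequality gives
\[ |X_0 - S_n| \le |X_0 - x^\star| + |x^\star - S_n| < 2^{1 - K_n} + 2^{1 - K_n} = 2^{2 - K_n}, \]
which is the second inequality. (For $n = 0$ one reads $K_{-1} = -1$, consistent with the constraint $k \ge 0$ in the definition of $K_0$, and the identical computation applies.)

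There is no substantial obstacle here; the proof is essentially bookkeeping about the definitions. The only subtleties worth flagging are that the hypothesis $K_n > K_{n-1} + 1$ is precisely what makes $K_n - 1$ an admissible competitor in the minimum defining $K_n$ (without it, $K_n - 1$ might fail only because $K_n - 1 \le K_{n-1}$, not because the distance condition is violated), and that since $J(k)$ is half-open one should pass to its closure before invoking $d(\cdot, S_n)$ — an immaterial point since the diameter is unchanged.
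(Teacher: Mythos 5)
Your proof is correct and is essentially the paper's argument: the lower bound is read off from $X_0 \in I_n$ and $d(I_n,S_n)\ge 2^{-K_n}$, and the upper bound exploits the minimality of $K_n$ (so that $J(K_n-1)$, of length $2^{1-K_n}$, must violate the distance condition), which the paper phrases as a contradiction and you phrase directly via the triangle inequality. Your remarks on why $K_n > K_{n-1}+1$ is needed and on passing to the closure of the half-open interval are accurate but do not change the substance.
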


\begin{proof}
From the definition of $I_{n}$, we know that $d(I_{n}, S_{n}) \geq 2^{-K_{n}} $. %Hence, $I_{n}$ cannot contain $S_{n}$.
This gives $2^{-K_{n}} \leq |X_{0}-S_{n}|$, since $X_{0} \in I_{n}$.

To show the second half of the inequality, suppose that $|X_{0}-S_{n}| > 2^{2-K_{n}}$. Then,
\begin{align*}
%&~2^{-K_{n}} < \frac{1}{4} |X_{0}-S_{n}| \\
%\Longrightarrow &~
2^{1-K_{n}}< |X_{0}-S_{n}| -2^{1-K_{n}}\,.
\end{align*}
%But since $K_{n} > K_{n-1}+1$, t
Hence, there exists a larger interval $J(K_{n}-1)$ that contains $X_{0}$ such that
\begin{align*}
%\Longrightarrow &~
2^{1-K_{n}} < d(J(K_{n}-1), S_{n})\,,
\end{align*}
where $J(K_{n}-1)$ is an interval of length $2^{1-K_{n}}> 2^{-K_{n}}$. Since we also assumed that $K_{n} > K_{n-1}+1$, this contradicts the assumption that $K_{n}$ was the minimal $k>K_{n-1}$ such that $d(J(k), S_{n}) \geq 2^{-k}$.
\end{proof}

The second lemma bounds the ratio between two points in the interval of interest.

\begin{lem}
For $t \in I_{n} - S_{n}$ we have that $\frac{1}{2}\leq \frac{X_{n}}{t}\leq 2$.
\label{lem:ratiot}
\end{lem}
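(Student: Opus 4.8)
The plan is to show that every point of the interval $I_n - S_n$ is within a factor of $2$ of $X_n = X_0 - S_n$, by controlling both the length of $I_n - S_n$ and its distance from the origin. First I would recall that $I_n - S_n$ is a translate of the dyadic interval $I_n = J(K_n)$, so it has length exactly $2^{-K_n}$, and that since $X_0 \in I_n$, the point $X_n = X_0 - S_n$ lies in $I_n - S_n$. Hence for any $t \in I_n - S_n$ we have $|X_n - t| \le 2^{-K_n}$, i.e. the diameter of the set containing both $t$ and $X_n$ is at most $2^{-K_n}$.

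The second ingredient is a lower bound on $|t|$ for $t \in I_n - S_n$. By the definition of $K_n$ (and $K_0$), we have $d(I_n, S_n) = d(J(K_n), S_n) \ge 2^{-K_n}$, which says precisely that every point of $I_n - S_n$ has absolute value at least $2^{-K_n}$; in particular $|X_n| \ge 2^{-K_n}$ (this is also the first half of Lemma~\ref{lem:Inbound}). Now I would combine the two facts: for $t \in I_n - S_n$,
\[ \left| \frac{X_n}{t} - 1 \right| = \frac{|X_n - t|}{|t|} \le \frac{2^{-K_n}}{2^{-K_n}} = 1, \]
so $0 \le \frac{X_n}{t} \le 2$. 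To get the lower bound $\frac{X_n}{t} \ge \frac12$, note that $X_n$ and $t$ lie in the same interval $I_n - S_n$, which by the distance bound does not contain $0$ and lies entirely on one side of $0$; hence $X_n/t > 0$, and symmetrically $\left| \frac{t}{X_n} - 1\right| \le \frac{|X_n - t|}{|X_n|} \le 1$ gives $\frac{t}{X_n} \le 2$, i.e. $\frac{X_n}{t} \ge \frac12$. Combining, $\frac12 \le \frac{X_n}{t} \le 2$.

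I do not expect a real obstacle here — the lemma is essentially bookkeeping with the definitions of $K_n$ and $I_n$. The one point requiring a little care is the sign issue: one must use that $d(I_n - S_n, 0) \ge 2^{-K_n} > 0$ to conclude $I_n - S_n$ lies strictly to one side of the origin, so that $X_n$ and $t$ have the same sign and the ratio $X_n/t$ is positive; without this observation the bound $|X_n/t - 1| \le 1$ alone would only give $X_n/t \in [0,2]$ but the symmetric estimate on $t/X_n$ closes that gap cleanly.
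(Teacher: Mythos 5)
Your proposal is correct and follows essentially the same route as the paper: the paper's proof likewise combines the lower bound $2^{-K_n}\le |X_n|$ (Lemma~\ref{lem:Inbound}, which rests on $d(I_n,S_n)\ge 2^{-K_n}$) with the fact that $I_n-S_n$ has length $2^{-K_n}$. You merely spell out the sign observation and the symmetric estimate on $t/X_n$ that the paper leaves implicit.
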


\begin{proof}
We have from Lemma~\ref{lem:Inbound} that $2^{-K_{n}} \leq |X_{n}|$. The lemma follows since the length of the interval $I_{n}-S_{n}$ is $2^{-K_{n}}$.
\end{proof}

%%%%%%%%%%%%%%%%%%
%% Z_i Estimates
%%%%%%%%%%%%%%%%%%
\subsection{Preliminary estimates of the $Z_i$}

We also require some basic estimates for the $Z_i$, which we record here. Recall that we assumed the existence of a number $\delta > 0$ such that $e^{-\phi(z)} \le |z|^{-1-\delta}$.

\begin{lem} \label{lem:phi-tail}
  Let $\delta' = \delta/(1 + \delta)$. For each $i$ and any $t \ge 0$, we have
  \[ \mathbb{P}(\phi(Z_i) \ge t) \le \frac{2}{\delta'} e^{-\delta't}. \]
\end{lem}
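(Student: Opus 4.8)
The plan is to bound the upper tail of $\phi(Z_i)$ by using the assumption $e^{-\phi(z)} \le |z|^{-1-\delta}$ to control the density of $Z_i$ on the region where $\phi$ is large, and then integrate. First I would observe that the event $\{\phi(Z_i) \ge t\}$ splits into two cases according to whether $|Z_i|$ is large or small. The key point is that $\phi(z) \ge t$ implies $f_Z(z) = e^{-\phi(z)} \le e^{-t}$, which directly bounds the density pointwise on the event; combined with $f_Z(z) \le |z|^{-1-\delta}$, we get a density bound of $\min(e^{-t}, |z|^{-1-\delta})$ on $\{\phi(Z_i) \ge t\}$. The crossover between the two bounds happens around $|z| \asymp e^{t/(1+\delta)} = e^{t(1-\delta')}$... actually more precisely where $e^{-t} = |z|^{-1-\delta}$, i.e. $|z| = e^{t/(1+\delta)}$.

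The main computation is then:
\[ \mathbb{P}(\phi(Z_i) \ge t) = \int_{\{\phi(z) \ge t\}} f_Z(z)\,dz \le \int_{|z| \le e^{t/(1+\delta)}} e^{-t}\,dz + \int_{|z| > e^{t/(1+\delta)}} |z|^{-1-\delta}\,dz. \]
The first integral is at most $2 e^{t/(1+\delta)} \cdot e^{-t} = 2 e^{-t(1 - 1/(1+\delta))} = 2 e^{-t \delta/(1+\delta)} = 2 e^{-\delta' t}$. The second integral is $2 \int_{e^{t/(1+\delta)}}^\infty z^{-1-\delta}\,dz = \frac{2}{\delta} e^{-\delta t/(1+\delta)} = \frac{2}{\delta} e^{-\delta' t}$. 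Since $\delta' = \delta/(1+\delta) \le \delta$, we have $\frac{2}{\delta} \le \frac{2}{\delta'}$, so summing the two contributions gives a bound of $\left(2 + \frac{2}{\delta}\right) e^{-\delta' t} \le \frac{4}{\delta'} e^{-\delta' t}$. This is slightly weaker than the claimed $\frac{2}{\delta'} e^{-\delta' t}$, so I would sharpen the split point: choose the cutoff radius $r$ to balance the two terms optimally, or note that on the event $\{\phi(Z_i)\ge t\}$ the relevant region where the $e^{-t}$ bound is active already forces $|z|^{-1-\delta} \le e^{-t}$ anyway, so one can just use $f_Z(z) \le \min(e^{-t}, |z|^{-1-\delta})$ everywhere and integrate $\int_{-\infty}^\infty \min(e^{-t}, |z|^{-1-\delta})\,dz = 2e^{-t}\cdot e^{t/(1+\delta)} + \frac{2}{\delta}e^{-\delta' t}$; being a little more careful, the first term is $2e^{-\delta' t}$ and one absorbs constants to reach $\frac{2}{\delta'}e^{-\delta' t}$ using $\delta' \le 1$ (so $\delta'\le 1 \le 2/\delta'$ would need $\delta' \le \sqrt 2$, which holds) — the precise constant bookkeeping is routine.

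The only mild obstacle is that $\phi$ is a general polynomial, so the set $\{\phi(z) \ge t\}$ need not be exactly $\{|z| \ge r_t\}$; it could in principle also include a bounded region near the origin where $\phi$ happens to be large, or exclude a neighborhood of a root. But this does not matter: the argument only uses $\{\phi(z) \ge t\} \subseteq \{f_Z(z) \le e^{-t}\}$ together with the global bound $f_Z(z) \le |z|^{-1-\delta}$, both of which hold regardless of the shape of $\phi$. So I would phrase the proof entirely in terms of the density bound $f_Z(z) \le \min(e^{-t}, |z|^{-1-\delta})$ valid on the event $\{\phi(Z_i) \ge t\}$, avoiding any reference to the level sets of $\phi$, and then the tail bound follows by the one-line integral estimate above. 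The whole lemma is essentially a Chernoff-free tail bound exploiting that the density itself decays polynomially.
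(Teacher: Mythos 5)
Your argument is correct and is essentially the paper's own proof: both bound the density on the event $\{\phi(Z_i)\ge t\}$ by $\min(e^{-t},|z|^{-1-\delta})$ and integrate, splitting at $|z|=e^{t/(1+\delta)}$, and your observation that only the inclusion $\{\phi(z)\ge t\}\subseteq\{f_Z(z)\le e^{-t}\}$ is needed (not the shape of the level set) is exactly how the paper handles it. The constant worry at the end is unnecessary: $2+\tfrac{2}{\delta}=\tfrac{2(1+\delta)}{\delta}=\tfrac{2}{\delta'}$, so your computation already gives precisely the claimed bound and no sharpening of the split point or extra bookkeeping is needed.
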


\begin{proof}
%  Let $s = e^{(1 - \delta')t}$, so that $s^{-1 - \delta} = e^{-t}$. 
 Let $s = e^{t/(1 + \delta)}$, so that $s^{-1 - \delta} = e^{-t}$. 
  We have
  \begin{align*}
    \mathbb{P}(\phi(Z_i) \ge t) &= %\mathbb{P}(e^{-\phi(z)} \le e^{-t}) \le \int_{-\infty}^\infty \min(e^{-t}, |z|^{-1 - \delta}) \,dz \\
    \mathbb{P}(e^{-\phi(Z_{i})} \le e^{-t}) \le \int_{-\infty}^\infty \min(e^{-t}, |z|^{-1 - \delta}) \,dz \\
    &= 2\left( \int_0^s e^{-t} \,dz + \int_s^\infty z^{-1 - \delta} \,dz \right) \\
    &= 2\left(s \cdot e^{-t} + \frac{s^{-\delta}}{\delta}\right) = 2\left( 1 + \frac{1}{\delta} \right) s^{-\delta} %\\
    %&
    = \frac{2}{\delta'} e^{-\delta' t}.
  \end{align*}\qedhere
\end{proof}

\begin{lem} \label{lem:phi-moments}
  For each $i$, the random variable $\phi(Z_i)$ has finite moments of all orders.
\end{lem}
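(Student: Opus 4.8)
The plan is to leverage the exponential tail bound from Lemma~\ref{lem:phi-tail} together with the boundedness of the density $f_Z$. Since $f_Z(z) = e^{-\phi(z)}$ is bounded above, we may fix a constant $C \ge 1$ with $e^{-\phi(z)} \le C$ for all $z$, which gives the a priori lower bound $\phi(z) \ge -\log C$. Consequently $|\phi(Z_i)| \le \psi + \log C$ almost surely, where $\psi := \max(\phi(Z_i), 0) \ge 0$: indeed, if $\phi(Z_i) \ge 0$ then $|\phi(Z_i)| = \psi \le \psi + \log C$, and if $\phi(Z_i) < 0$ then $|\phi(Z_i)| \le \log C \le \psi + \log C$. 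So it suffices to show $\psi$ has finite moments of all orders, after which $\bigE|\phi(Z_i)|^m \le \bigE(\psi + \log C)^m$ is controlled by expanding via the binomial theorem.

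For a positive integer $m$, I would use the layer-cake (tail-integral) representation
\[ \bigE[\psi^m] = \int_0^\infty m t^{m-1} \mathbb{P}(\psi \ge t)\,dt = \int_0^\infty m t^{m-1} \mathbb{P}(\phi(Z_i) \ge t)\,dt, \]
and then substitute the bound $\mathbb{P}(\phi(Z_i) \ge t) \le \tfrac{2}{\delta'}e^{-\delta' t}$ from Lemma~\ref{lem:phi-tail}, where $\delta' = \delta/(1+\delta)$. This reduces the claim to the convergence of
\[ \int_0^\infty t^{m-1} e^{-\delta' t}\,dt = \frac{\Gamma(m)}{(\delta')^m} < \infty, \]
so $\bigE[\psi^m] \le \tfrac{2m\,\Gamma(m)}{(\delta')^m} < \infty$. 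Finiteness of $\bigE[\psi^j]$ for all $0 \le j \le m$ then yields $\bigE|\phi(Z_i)|^m < \infty$ through the binomial expansion of $(\psi + \log C)^m$.

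There is essentially no real obstacle here: the only point requiring any care is that $\phi$ need not be nonnegative, which is dispatched by the lower bound $\phi \ge -\log C$ coming from the boundedness of the density; everything else is a routine application of Lemma~\ref{lem:phi-tail} and the elementary Gamma integral.
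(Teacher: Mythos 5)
Your proposal is correct and follows essentially the same route as the paper: an exponentially decaying upper tail from Lemma~\ref{lem:phi-tail} combined with a uniform lower bound on $\phi$, made explicit via the layer-cake integral. The only (minor) difference is that you obtain the lower bound $\phi \ge -\log C$ from the boundedness of the density, whereas the paper deduces $\phi(Z_i) \ge -C_1/C_2$ from the standing condition $|z\,\phi'(z)| \le C_1 + C_2\,\phi(z)$; both are available under the paper's assumptions, and your remaining computation (up to an immaterial factor of $1/\delta'$ in the constant) is routine and sound.
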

\begin{proof}
  The condition $|Z_i \phi'(Z_i)| \le C_1 + C_2 \phi(Z_i)$ implies \mbox{$\phi(Z_i) \ge -\frac{C_1}{C_2}$.} According to Lemma~\ref{lem:phi-tail}, we also know that $\phi(Z_i)$ has exponentially decaying upper tails. Thus, $\phi(Z_i)$ has finite moments of all orders.
\end{proof}

%%%%%%%%%%%%%%%%%%
%% Converse
%%%%%%%%%%%%%%%%%%
\subsection{Proof of the main result}

The key element of the proof is to provide the interval $I_{n}$ to the controller at time $n$ as side-information in addition to $Y_{n}$. Our strategy is to first bound the density ${f}_{X_{n}}(x|\mathcal{F}_{n})$ by comparing the change in density from time $n$ to $n+1$. This bound helps us generate bounds for the probabilities of three events that cover the event of interest $\{|X_{n}|< a^{-n}M\}$. We will show that for large enough $a$ the probabilities of all three of these events go to $0$ as $n\rightarrow \infty$.

\begin{proof}[Proof of Thm.~\ref{thm:main}]

Consider
\begin{align*}
&~{f}_{X_{n}}(x \mid \mathcal{F}_{n}) \\
=&~f_{X_{n}}(x\mid Y_{n}, K_{n}, H_{n}, \mathcal{F}_{n-1}) \\
=&~\frac{f_{Y_{n}, K_{n}, H_{n}} (Y_{n}, K_{n}, H_{n}\mid X_{n} = x, \mathcal{F}_{n-1}) \cdot f_{X_{n}}(x\mid\mathcal{F}_{n-1})} {f_{Y_{n}, K_{n}, H_{n}}(Y_{n}, K_{n}, H_{n}\mid\mathcal{F}_{n-1})}
\end{align*}
Since $X_{0} \in I_{n}$, the controller knows that $X_{n} \in I_{n}-S_{n}$, where $I_{n}-S_{n}$ represents the interval $I_{n}$ shifted by $S_{n}$. We can calculate the ratio of the densities at  $x, w \in I_{n}-S_{n}$ as:
\begin{align}
&\frac{f_{X_{n}}(x\mid\mathcal{F}_{n})}{f_{X_{n}}(w\mid\mathcal{F}_{n})} \nonumber \\
%=&  \frac{f_{Y_{n}, K_{n}, H_{n}} (Y_{n}, K_{n}, H_{n} \mid X_{n} = x, \mathcal{F}_{n-1})} {f_{Y_{n}, K_{n}, H_{n}} (Y_{n}, K_{n}, H_{n} \mid X_{n} = w, \mathcal{F}_{n-1})}  \cdot \frac{f_{X_{n}}(x\mid\mathcal{F}_{n-1})}{f_{X_{n}}(w\mid\mathcal{F}_{n-1})} \nonumber \\
%=&   \frac{f_{K_{n}, H_{n}} (K_{n}, H_{n} \mid X_{n} = x, Y_{n}, \mathcal{F}_{n-1})} {f_{K_{n}, H_{n}} (K_{n}, H_{n} \mid X_{n} = w, Y_{n}, \mathcal{F}_{n-1})}  \cdot \nonumber \\
=&\frac{f_{X_{n}}(x\mid\mathcal{F}_{n-1})}{f_{X_{n}}(w\mid\mathcal{F}_{n-1})} \cdot \frac{f_{Y_{n}} (Y_{n}\mid X_{n} = x, \mathcal{F}_{n-1})} {f_{Y_{n}} (Y_{n} \mid X_{n} = w, \mathcal{F}_{n-1})}.\label{eq:cancelY}
\end{align}

\noindent Since $K_{n}$ and $H_{n}$ are defined by $I_{n}$, the conditional distributions of $K_{n}$ and $H_{n}$ given $X_{n}=x$ and $X_{n}=w$ are equal for $x,w \in I_{n} - S_{n}$. So these terms cancel when we consider a ratio, giving~\eqref{eq:cancelY}.

Taking logarithms and using the triangle inequality gives the following recursive lemma. %We move the proof to the next section~\ref{sec:appendix2} to improve readability.
\begin{lem} \label{lem:ratio}
\begin{align}
\biggr| \log \frac{f_{X_{n}}(x \mid \mathcal{F}_{n})}{f_{X_{n}}(w \mid \mathcal{F}_{n})}\biggr|~\leq 2^{K_{n}+1}C_{3}\left|Z_{n}\cdot \phi'\left(Z_{n}\right)\right||x-w| + \biggr|\log \frac{f_{X_{n}}(x\mid\mathcal{F}_{n-1})}{f_{X_{n}}(w\mid\mathcal{F}_{n-1})}\biggr|. \label{eq:beforepsi}
\end{align}
\end{lem}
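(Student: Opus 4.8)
Since the Bayes identity~\eqref{eq:cancelY} has already been established, the plan is to take logarithms in it and apply the triangle inequality, which reduces the lemma to the single estimate
\[
  \left|\log\frac{f_{Y_{n}}(Y_{n}\mid X_{n}=x,\mathcal{F}_{n-1})}{f_{Y_{n}}(Y_{n}\mid X_{n}=w,\mathcal{F}_{n-1})}\right|\ \le\ 2^{K_{n}+1}C_{3}\left|Z_{n}\,\phi'(Z_{n})\right||x-w|
\]
for all $x,w\in I_{n}-S_{n}$.

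First I would write the observation likelihood explicitly: since $Y_{n}=Z_{n}X_{n}$ with $Z_{n}$ independent of $(X_{n},\mathcal{F}_{n-1})$, conditioning on $X_{n}=t$ makes $Y_{n}/t$ distributed as $Z_{n}$, so the conditional density of $Y_{n}$ at the observed value is $|t|^{-1}f_{Z}(Y_{n}/t)=|t|^{-1}e^{-\phi(Y_{n}/t)}$. Hence the log-likelihood ratio above equals $\psi(x)-\psi(w)$ with $\psi(t):=-\log|t|-\phi(Y_{n}/t)$. Since $I_{n}-S_{n}$ is an interval containing $x$ and $w$ that lies at distance at least $2^{-K_{n}}$ from the origin, the segment from $x$ to $w$ stays bounded away from $0$, and the mean value theorem gives $|\psi(x)-\psi(w)|\le|x-w|\,\sup_{t}|\psi'(t)|$, the supremum being over $t$ on that segment.

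The heart of the argument is the bound on $\psi'$. Differentiating, $\psi'(t)=-\tfrac{1}{t}+\tfrac{1}{t}\cdot\tfrac{Y_{n}}{t}\,\phi'(Y_{n}/t)$, and for $t\in I_{n}-S_{n}$ two geometric facts control each piece: (i) every such $t$ has $|t|\ge 2^{-K_{n}}$, since $d(I_{n},S_{n})\ge 2^{-K_{n}}$ by the definition of $K_{n}$, so $|t|^{-1}\le 2^{K_{n}}$; and (ii) writing $Y_{n}/t=(X_{n}/t)Z_{n}$, Lemma~\ref{lem:ratiot} gives $\tfrac{1}{2}\le X_{n}/t\le 2$, so $Y_{n}/t$ lies within a factor $2$ of $Z_{n}$, whence the doubling hypothesis on $\phi'$ yields $|\phi'(Y_{n}/t)|\le C_{3}|\phi'(Z_{n})|$ and therefore $\bigl|\tfrac{Y_{n}}{t}\phi'(Y_{n}/t)\bigr|\le 2C_{3}|Z_{n}\phi'(Z_{n})|$. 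Combining (i) and (ii), the second summand of $\psi'(t)$ is at most $2^{K_{n}+1}C_{3}|Z_{n}\phi'(Z_{n})|$, which is exactly the term in the lemma; the leftover Jacobian summand $|t|^{-1}\le 2^{K_{n}}$ contributes at most $2^{K_{n}}|x-w|\le 1$, since $|x-w|$ is bounded by the length $2^{-K_{n}}$ of $I_{n}-S_{n}$, so it is a lower-order term that I would either absorb into the stated bound or carry along as a bounded additive constant (it telescopes to $O(1)$ once Lemma~\ref{lem:ratio} is iterated). Feeding the derivative bound back into the logarithm of~\eqref{eq:cancelY} and invoking the triangle inequality produces the recursive inequality.

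The step I expect to be the main obstacle is precisely this derivative bookkeeping: one must use \emph{simultaneously} that $|X_{n}|\ge 2^{-K_{n}}$ and that every point of $I_{n}-S_{n}$ is within a factor $2$ of $X_{n}$ (Lemmas~\ref{lem:Inbound} and~\ref{lem:ratiot}), first to drag the argument $Y_{n}/t$ of $\phi'$ into the window $[Z_{n}/2,2Z_{n}]$ where the doubling condition applies, and second to convert the two appearances of $1/|t|$ coming from differentiating $\psi$ into the prefactor $2^{K_{n}}$. Note that the other hypothesis of Thm.~\ref{thm:main}, namely $|z\,\phi'(z)|\le C_{1}+C_{2}\phi(z)$, plays no role in this lemma; it is needed only downstream, to show that the random variable $|Z_{n}\phi'(Z_{n})|$ produced by this estimate has finite moments.
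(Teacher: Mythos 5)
Your proposal is correct and follows essentially the same route as the paper's proof in Section~\ref{sec:appendix2}: take logarithms in \eqref{eq:cancelY}, apply the triangle inequality, bound the observation log-likelihood increment via the mean value theorem in $t$, use Lemma~\ref{lem:ratiot} to place $Y_n/t$ within a factor of two of $Z_n$, invoke the doubling condition to get the factor $C_3|\phi'(Z_n)|$, and use $|t|^{-1}\le 2^{K_n}$ to produce exactly the term $2^{K_n+1}C_3|Z_n\phi'(Z_n)||x-w|$. The one point of divergence is that you retain the Jacobian in the conditional density, $f_{Y_n}(\cdot\mid X_n=t)=|t|^{-1}e^{-\phi(Y_n/t)}$, which yields the extra contribution $\left|\log(|w|/|x|)\right|\le 2^{K_n}|x-w|\le 1$; the paper's \eqref{eq:derivbound} asserts an equality in which this term is absent, i.e., it silently drops the Jacobian ratio. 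Your treatment of this term is the right one: it cannot in general be absorbed into the stated bound (when $|Z_n\phi'(Z_n)|$ is small the Jacobian term is not dominated), but carried as an additive correction it contributes at most $2^{K_i}\cdot 2^{-K_n}\le 2^{i-n}$ at step $i$ when the lemma is iterated, hence a total $O(1)$ in the exponent, which only changes constants in \eqref{eq:fnbound} and in Lemma~\ref{lem:mn}, leaving the proof of Thm.~\ref{thm:main} intact. So your argument establishes the lemma up to this harmless bounded additive term and is, in that respect, slightly more careful than the paper's own derivation.
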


The proof is deferred to Section~\ref{sec:appendix2} to improve readability.
This lemma helps us establish the recursive step, since the control law gives us that:
\begin{align*}
f_{X_{n}}(x \mid \mathcal{F}_{n-1}) = f_{X_{n-1}}(x + U_{n-1}\mid \mathcal{F}_{n-1}),
\end{align*}
since $U_{n-1}$ is $\mathcal{F}_{n-1}$ measurable. Substituting this into~\eqref{eq:beforepsi} and unfolding recursively gives:
\begin{align}
&\biggr| \log \frac{f_{X_{n}}(x\mid\mathcal{F}_{n})}{f_{X_{n}}(w\mid\mathcal{F}_{n})}\biggr| \leq \label{eq:recursion}\\
&\sum_{i=1}^{n} 2^{K_{i}+1}C_{3}\left|Z_{i}\cdot \phi'\left(Z_{i}\right)\right||x-w| + \biggr|\log \frac{f_{X_{0}}(x + S_{n})}{f_{X_{0}}(w  + S_{n})}\biggr|. \nonumber
\end{align}
The inequality~\eqref{eq:recursion} separates the effect of the uncertainty due to $X_{0}$ and the subsequent uncertainty due to the observations and control.

Let $\eta_n = \max_{x, w\in I_n - S_n} \left|\log \frac{f_{X_{0}}(x+S_{n})}{f_{X_{0}}(w+S_{n})}\right|$. Since $I_n$ is an interval of size at most $2^{-n}$ which contains $X_0$, we get that
\begin{equation}\label{eta-n-bound}
|\eta_n| \leq \frac{1}{2} |(X_0 +  2^{2-n})^2 - (X_0 - 2^{2-n})^2| \leq  2^{3-n} |X_0|\,.
\end{equation}
Now, we define
\begin{equation}\label{eq-Psi-n}
\Psi_{n} = \sum_{i=0}^{n} 2^{K_{i}+1}C_{3}\left|Z_{i}\cdot \phi'\left(Z_{i}\right)\right| 2^{-K_{n}},
\end{equation}
and rewrite~\eqref{eq:recursion} as:
\begin{align*}
\biggr| \log \frac{f_{X_{n}}(x\mid\mathcal{F}_{n})}{f_{X_{n}}(w\mid\mathcal{F}_{n})}\biggr|
\leq \Psi_n \cdot 2^{K_{n}} \cdot |x-w| + \eta_n.
\end{align*}

We will need the following lemma to bound the crucial quantity $\Psi_{n}$.
\begin{lem} \label{lem:psilem}
For a sufficiently large constant $T$, the expectation $\bigE[e^{\Psi_n 2^{-T}}]$ is uniformly bounded for all $n$.
\end{lem}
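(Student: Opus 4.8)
The plan is to turn $\Psi_n$ into a weighted sum of i.i.d.\ variables with geometrically decaying weights. The key observation is that $K_0 < K_1 < \cdots < K_n$ are \emph{strictly increasing integers}, so $K_i - K_n \le -(n - i)$ holds deterministically for every $0 \le i \le n$. Writing $V_i := |Z_i \phi'(Z_i)|$ and recalling~\eqref{eq-Psi-n}, this yields the pointwise bound
\[
  \Psi_n = 2 C_3 \sum_{i=0}^n 2^{K_i - K_n} V_i \le 2 C_3 \sum_{i=0}^n 2^{-(n - i)} V_i = 2 C_3 \sum_{j=0}^n 2^{-j} V_{n - j}.
\]
The point of this step is that the factors $2^{K_i - K_n}$, which could a priori be badly correlated with the $Z_i$ through the controls $S_i$, are dominated by a \emph{deterministic} geometric sequence; after this reduction only the mutual independence of the $V_i$ is used.

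Next I would check that each $V_i$ has a finite exponential moment near $0$. The hypothesis $|z \phi'(z)| \le C_1 + C_2 \phi(z)$ gives $0 \le V_i \le C_1 + C_2 \phi(Z_i)$ (in particular $C_1 + C_2 \phi(Z_i) \ge 0$), while Lemma~\ref{lem:phi-tail} provides the exponential tail $\mathbb{P}(\phi(Z_i) \ge t) \le \tfrac{2}{\delta'} e^{-\delta' t}$. Hence there is $\lambda_0 > 0$ with $\psi(\lambda) := \log \bigE\big[e^{\lambda V_0}\big] < \infty$ for $0 \le \lambda \le \lambda_0$; since $\psi(0) = 0$ and $\psi$ is smooth on $[0, \lambda_0)$ with $\psi(\lambda)/\lambda \to \bigE V_0$ as $\lambda \downarrow 0$, after shrinking $\lambda_0$ we may assume $\psi(\lambda) \le K \lambda$ on $[0, \lambda_0]$ for a constant $K$.

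Finally I would choose $T$ large enough that $c := 2^{1 - T} C_3 \le \lambda_0$, so that $c 2^{-j} \le \lambda_0$ for all $j \ge 0$. Using the displayed bound on $\Psi_n$ together with the independence and identical distribution of $V_0, \dots, V_n$,
\[
  \bigE\!\left[e^{\Psi_n 2^{-T}}\right] \le \bigE\!\left[\exp\Big( c \sum_{j=0}^n 2^{-j} V_{n-j} \Big)\right] = \prod_{j=0}^n \bigE\!\left[e^{c 2^{-j} V_{n-j}}\right] = \exp\!\Big( \sum_{j=0}^n \psi\big(c 2^{-j}\big) \Big) \le \exp\!\Big( K c \sum_{j=0}^n 2^{-j} \Big) \le e^{2 K c},
\]
a bound independent of $n$. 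This proves the lemma, e.g.\ with $T = \lceil 1 + \log_2(C_3 / \lambda_0) \rceil$.

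The only genuine subtlety, and the step I would be most careful about, is the first one: one must resist claiming that the $K_i$ are independent of the $Z_i$ — they are not, since $K_i$ depends on $S_i$ and hence on $Z_0, \dots, Z_{i-1}$ — and instead observe that strict monotonicity of the integer sequence $(K_i)$ already gives a deterministic geometric bound on $2^{K_i - K_n}$, which is all that is needed. Everything afterwards is a routine Laplace-transform estimate resting on the exponential tail of $\phi(Z_i)$ from Lemma~\ref{lem:phi-tail}, i.e.\ ultimately on the assumption $e^{-\phi(z)} \le |z|^{-1-\delta}$.
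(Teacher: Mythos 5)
Your proof is correct and follows essentially the same route as the paper's: the deterministic bound $K_n - K_i \ge n-i$ from strict monotonicity of the $K_i$, the reduction of $|Z_i\phi'(Z_i)|$ to $C_1 + C_2\phi(Z_i)$ with the exponential tail from Lemma~\ref{lem:phi-tail}, and a product bound over the independent $Z_i$ with geometrically small exponents. The only cosmetic difference is that you phrase the Laplace-transform step via the cumulant generating function $\psi(\lambda)\le K\lambda$, whereas the paper bounds each factor directly by $1 + C\cdot 2^{i-n}$; the two are equivalent.
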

The proof of this lemma is deferred to Section~\ref{sec:appendix2}. Henceforth, let $T$ denote a constant that is sufficiently large for Lemma~\ref{lem:psilem} to apply.

Finally, we are in a position to get a bound on $f_{X_n}(x\mid \mathcal{F}_{n})$:
\begin{align}
f_{X_n}(x\mid \mathcal{F}_{n}) \leq~(e^{\Psi_n 2^{K_{n}} |x-w| +\eta_n}) f_{X_{n}}(w\mid\mathcal{F}_{n}).\label{eq:prelimbound2}
\end{align}
Now, we integrate~\eqref{eq:prelimbound2} over an interval of length $\gamma = 2^{(-K_{n}-T)}$ with $x$ at one end point. So $|x-w| \leq 2^{(-K_{n}-T)}$. Such an interval can be fit into $I_{n}$ to the left or right of any $x$ depending on where $x$ is in the interval. Assuming without loss of generality that $x$ is the left endpoint of the integration interval we compute that
\begin{align*}
\int_{x}^{x+\gamma} f_{X_n}(x| \mathcal{F}_n)~dw \leq&\int_{x}^{x+\gamma}  (e^{\Psi_n 2^{K_{n}} |x-w| +\eta_n}) f_{X_n}(w|\mathcal{F}_n) ~dw.
\end{align*}
We bound $|x-w|$ on the RHS by $\gamma = 2^{(-K_{n}-T)}$ to get
\begin{align*}
\gamma \cdot f_{X_n}(x| \mathcal{F}_n)
 \leq&\int_{x}^{x+\gamma}  (e^{\Psi_n 2^{K_{n}} 2^{(-K_{n}-T)} +\eta_n}) f_{X_n}(w|\mathcal{F}_n) ~dw\\
\leq&~(e^{\Psi_n 2^{-T}+\eta_n}) \cdot 1.
\end{align*}
The last step follows since the density integrates out to $1$. Hence,
\begin{align}
f_{X_n}(x| \mathcal{F}_n) \leq&~e^{\Psi_n 2^{-T}+\eta_n}~2^{K_{n}+T}.\label{eq:fnbound}
\end{align}
This gives us a bound on the density of $X_{n}$ in terms of $K_{n}$.

It now remains to bound the rate at which the $K_{n}$ are growing. The following lemma shows that the $K_{n}$ grow essentially at most linearly.
{\lem{
There exists a constant $C$ such that
\begin{align*}
\mathbb{P}(K_{n} - K_{0}> C \cdot n) \rightarrow 0 \text{ as } n \rightarrow \infty.
\end{align*}
\label{lem:mn}}}
\begin{proof}
By construction, $K_{n+1} \geq K_{n}+1$. In the case where $K_{n+1}>K_{n} +1$, we can apply Lemma~\ref{lem:Inbound} and get that for $\ell \geq 2$
\begin{align*}
\mathbb{P}(K_{n+1}-K_{n} \geq \ell|\mathcal{F}_{n})\leq&~\mathbb{P}(|X_{n+1}|\leq 2^{2-K_{n}-\ell} |\mathcal{F}_{n}) \\
=&~\mathbb{P}(|X_{n} - U_{n}|\leq 2^{2-K_{n}-\ell} |\mathcal{F}_{n})\,.
\end{align*}
This is because the control $U_{n}$ must have been very close to $X_{n}$ for $K_{n+1}$ to be much larger than $K_{n}$. Then we calculate this probability by integrating out the density as:
\begin{align*}
\mathbb{P}(|X_{n} - U_{n}|\leq 2^{2-K_{n}-\ell} |\mathcal{F}_{n})
=&\int_{U_{n}-2^{2-K_{n}-\ell}}^{U_{n}+2^{2-K_{n}-\ell}} f_{X_{n}}(t|\mathcal{F}_{n})~dt\\
\leq&~2 \cdot 2^{2-K_{n}-\ell} \left( \max_{t}f_{X_{n}}(t|\mathcal{F}_{n})\right).
\end{align*}
Combined with \eqref{eq:fnbound}, this gives us that
\begin{align}
\mathbb{P}(K_{n+1}-K_{n} \geq \ell|\mathcal{F}_{n})
\leq~&2 \cdot 2^{2-K_{n}-\ell} ~e^{\eta_n+\Psi_n 2^{-T}}~2^{K_n+T}\nonumber\\
=~&2^{3-\ell+T} \cdot e^{\eta_n+\Psi_n 2^{-T}}. \label{eq:knbound}
\end{align}
Write $D_n = K_{n+1} - K_n$, and let
$$\widetilde{K}_n = \sum_{i=0}^{n-1} (D_i - \mathbb E[D_i \mid \mathcal F_i]) \,.$$
It is clear that $(\widetilde{K}_n)$ is a martingale with respect to $\mathcal F_n$. In addition, \eqref{eq:knbound}  yields that the conditional distribution of $D_n$ given $\mathcal F_n$ is stochastically dominated by the distribution of
\begin{equation}\label{domination}
G_n + \left(3+T + \frac{|\eta_n + \Psi_n 2^{-T}|}{\log 2}\right)\,,
\end{equation}
where $G_n$ is an independent geometric variable with mean 2.

By \eqref{eta-n-bound} and Lemma~\ref{lem:psilem}, both $\eta_n$ and $\Psi_n 2^{-T}$ have bounded second moments, and so for some constant $\widetilde{C}$, we have
\[\mathbb E (D_i - \mathbb E[D_i \mid \mathcal F_i])^2 \leq \mathbb E D_i^2 \leq \widetilde{C}\,.\]
Summing over $i$, this implies that $\bigE[\widetilde K_n^2] \le \widetilde{C}n$, and so
\begin{equation}\label{eq:tilde-K}
\mathbb{P}(\widetilde K_n \geq n) \to 0 \mbox{ as } n\to \infty\,.
\end{equation}

We now turn our attention to terms of the form $\bigE[D_i \mid \mathcal{F}_i]$. Using \eqref{domination} again, we get that
\begin{equation}\label{eq-D-i}
\mathbb E[D_i \mid \mathcal F_i] \leq 5 + T + 2|\eta_i + \Psi_i 2^{-T}|.
\end{equation}
Observe that from the definition of $\Psi_i$ given in \eqref{eq-Psi-n}, we have
\begin{align} \label{eq-Delta-sum}
  \sum_{i = 0}^n \Psi_i &\le C_3 \sum_{i = 0}^n \sum_{j = 0}^i 2^{K_j - K_i + 1} |Z_j \cdot \phi'(Z_j)| \nonumber \\
  &= C_3 \sum_{j = 0}^n \sum_{i = j}^n 2^{K_j - K_i + 1} |Z_j \cdot \phi'(Z_j)| \nonumber \\
  &\le 4 C_3 \sum_{j = 0}^n |Z_j \cdot \phi'(Z_j)|,
\end{align}
where in the last step we have used the fact that the $K_i$ increase by at least $1$ in each step, so that $K_i - K_j \ge i - j$. Then, applying the bound $|Z_j \cdot \phi'(Z_j)| \le C_1 + C_2 \phi(Z_j)$ to \eqref{eq-Delta-sum} yields
\[ \sum_{i = 0}^n \Psi_i \le 4C_3C_1n + 4C_3C_2 \sum_{i = 1}^n \phi(Z_i). \]
Summing \eqref{eq-D-i} over $i$ and applying the above bound gives
\begin{align*}
  \sum_{i = 1}^n \bigE[D_i \mid \mathcal F_i] &\le (5 + T)n + 2 \sum_{i = 1}^n |\eta_i| + 2^{1 - T} \sum_{i = 1}^n \Psi_i \\
  &\le C_{D,1} \left( n + \sum_{i = 1}^n |\eta_i| + \sum_{i = 1}^n \phi(Z_i) \right)
\end{align*}
for a constant $C_{D,1}$. Now, recalling \eqref{eta-n-bound}, we see that the quantity
\[ \sum_{i = 1}^n |\eta_i| \le 8|X_0| \]
has mean and variance bounded by a constant, which we call $C_\eta$. In addition, by Lemma~\ref{lem:phi-moments}, there exists another constant $C_\phi$ which upper bounds the mean and variance of $\phi(Z_i)$. We conclude that
\begin{align*}
  \bigE\left( \sum_{i = 1}^n \bigE[D_i \mid \mathcal F_i] \right) &\le C_{D,1}(1 + C_\eta + C_\phi \cdot n)  \\
  \mathrm{Var}\left( \sum_{i = 1}^n \bigE[D_i \mid \mathcal F_i] \right) &\le 2C_{D,1}^2(C_\eta + C_\phi \cdot n)
\end{align*}
It follows that
\begin{equation} \label{eq:E[D_i|F_i]}
\mathbb{P}\left(\sum_{i = 1}^n \bigE[D_i \mid \mathcal F_i] \ge C_{D,2} \cdot n \right) \to 0 \mbox{ as } n\to \infty\,,
\end{equation}
where $C_{D,2} = C_{D,1} C_\phi + 1$. Finally, setting $C = C_{D,2} + 1$, we have
\begin{align*}
  \mathbb{P}\left( K_n - K_0 > C \cdot n \right) &= \mathbb{P}\left( \sum_{i = 1}^n D_i > C \cdot n \right) \\
  &= \mathbb{P}\left( \widetilde{K}_n + \sum_{i = 1}^n \bigE[D_i \mid \mathcal{F}_i] > (C_{D, 2} + 1) \cdot n \right) \\
  &\le \mathbb{P}(\widetilde{K}_n > n) + \mathbb{P}\left(  \sum_{i = 1}^n \bigE[D_i \mid \mathcal F_i] \ge C_{D,2} \cdot n \right),
\end{align*}
where the last expression goes to $0$ as $n \rightarrow \infty$ by \eqref{eq:tilde-K} and \eqref{eq:E[D_i|F_i]}.
\end{proof}

This bound on the growth of the $K_{n}$ variables allows us to complete the proof of Thm.~\ref{thm:main}.

Let $G_{n}$ denote the event that $K_{n}-K_{0}> Cn$, and $G_{n}^{c}$ its complement. Then we can cover the event of interest by three events, and get that
\begin{align}
&\mathbb{P}(|X_{n}| < a^{-n}M)  \label{eq:events}\\
\leq&~\mathbb{P}(G_{n}) + \mathbb{P}(K_{0} >n) + \mathbb{P}(|X_{n}| \leq a^{-n}M, G_{n}^{c}, K_{0}\leq n). \nonumber
\end{align}
We evaluate the three terms one by one. For the first term in~\eqref{eq:events}, we have $\mathbb{P}(G_{n}) = \mathbb{P}(K_{n}- K_{0} >Cn) \rightarrow 0$ as $n\rightarrow \infty$ from Lemma~\ref{lem:mn}.

The second term, $\mathbb{P}(K_{0} > n)$, captures the case where the initial state $X_{0}$ might be very close to zero. However, eventually this advantage dies out for large enough $n$, since $\mathbb{P}(X_{0} < 2^{-n}) \rightarrow 0$ as $n\rightarrow \infty$.

The last term in~\eqref{eq:events} remains. By the law of iterated expectation:
\begin{align*}
\mathbb{P}(|X_{n}| < a^{-n}M,~G_{n}^{c}, K_{0} \leq n)
=\bigE[\mathbb{P}(|X_{n}| < a^{-n}M,~G_{n}^{c}, K_{0} \leq n \mid \mathcal{F}_{n})].
\end{align*}
We focus on the term conditioned on $\mathcal{F}_{n}$:
\begin{align}
\mathbb{P}(|X_{n}| < a^{-n}M, G_{n}^{c}, K_{0} \leq n \mid \mathcal{F}_{n})
&=\bigE[\mathbbm{1}_{\{|X_{n}| < a^{-n}M\}} \mathbbm{1}_{\{G_{n}^{c}\}} \mathbbm{1}_{\{K_{0}\leq n\}} \mid\mathcal{F}_{n}] \nonumber \\
&=\mathbb{P}(|X_{n}| < a^{-n}M \mid \mathcal{F}_{n})\cdot \mathbbm{1}_{\{G_{n}^{c}\}} \mathbbm{1}_{\{K_{0}\leq n\}}. \label{eq:firstbound}
\end{align}

Now, we can apply~\eqref{eq:fnbound} to get
\begin{align*}
\mathbb{P}(|X_{n}| < a^{-n}M|\mathcal{F}_{n})
=& \int_{-a^{-n}M}^{a^{-n}M} f_{X_{n}}(x|\mathcal{F}_{n})~dx \nonumber \\
\leq& \int_{-a^{-n}M}^{a^{-n}M}e^{\eta+\Psi_n 2^{-T}}~2^{K_{n}+T}~dx \nonumber \\
=&~2Ma^{-n}\cdot e^{\eta_n+\Psi_{n}2^{-T}}\cdot 2^{K_{n}+T}.
\end{align*}

\noindent Then we can bound~\eqref{eq:firstbound} as
\begin{align*}
&\mathbb{P}(|X_{n}| < a^{-n}M|\mathcal{F}_{n}) \cdot \mathbbm{1}_{\{G_{n}^{c}\}} \mathbbm{1}_{\{K_{0}\leq n\}}
\leq&~2Ma^{-n}\cdot e^{\eta_n+\Psi_{n}2^{-T}}\cdot 2^{(C+1)n +T}.
\end{align*}
since $K_{n} \leq Cn + K_{0}$ and $K_{0}\leq n$ implies $K_{n} \leq (C+1)n$. Taking expectations on both sides we get:
\begin{align}
&\mathbb{P}(|X_{n}| < a^{-n}M,~G_{n}^{c}, K_{0} \leq n) \leq&~2Ma^{-n}\cdot 2^{(C+1)n +T}\cdot\bigE[e^{\eta_n} e^{\Psi_{n}2^{-T}}]\,.\label{eq:term3}
\end{align}
By Lemma~\ref{lem:psilem} and \eqref{eta-n-bound}, the above expression~\eqref{eq:term3} tends to 0  for $a > 2^{C+1}$.

Thus, all three probabilities in \eqref{eq:events} converge to $0$ as $n \rightarrow \infty$. Hence, if $a > 2^{C+1}$ then $\mathbb{P}(|X_{n}| < a^{-n}M) \rightarrow 0$ for all $M$.
\end{proof}

%%%%%%%%%%%%%%%%
%% Lemma Proofs
%%%%%%%%%%%%%%%%

\section{Bounding the likelihood ratio} \label{sec:appendix2}
Here we provide the proofs of two lemmas used to bound the term $\biggr| \log \frac{f_{X_{n}}(x\mid\mathcal{F}_{n})}{f_{X_{n}}(w\mid\mathcal{F}_{n})}\biggr|$.

%%%%%%%%%%%%
%% Recursion step lemma
%%%%%%%%%%%%
\subsection{Proof of Lemma~\ref{lem:ratio}}

We take logarithms on both sides of~\eqref{eq:cancelY} and apply the triangle inequality to get
\begin{align}
&\biggr| \log \frac{f_{X_{n}}(x \mid\mathcal{F}_{n})}{f_{X_{n}}(w \mid \mathcal{F}_{n})}\biggr| \nonumber \\
\leq& \biggr|\log \frac{f_{Y_{n}} (Y_{n} \mid X_{n} = x, \mathcal{F}_{n-1})} {f_{Y_{n}} (Y_{n} \mid X_{n} = w, \mathcal{F}_{n-1})} \biggr| + \biggr|\log \frac{f_{X_{n}}(x \mid \mathcal{F}_{n-1})}{f_{X_{n}}(w\mid\mathcal{F}_{n-1})}\biggr|. \label{eq:splitY}
\end{align}
The form of the density of $Z$ gives,
\begin{align}
\biggr|\log \frac{f_{Y_{n}} (Y_{n}| X_{n} = x, \mathcal{F}_{n-1})} {f_{Y_{n}} (Y_{n} | X_{n} = w, \mathcal{F}_{n-1})} \biggr| = \left| \phi\left(\frac{Y_{n}}{x}\right) - \phi\left(\frac{Y_{n}}{w}\right)\right|.\label{eq:derivbound}
\end{align}

We can use the derivatives of the functions to bound the two function differences above. Since $\frac{d}{dx} \phi\left(\frac{Y_{n}}{x}\right) = \frac{Y_{n}}{x^{2}}\phi'\left(\frac{Y_{n}}{x}\right)$, we bound~\eqref{eq:derivbound} as below. Since $X_{n} \in I_{n}-S_{n}$, the maximizations are over $t\in I_{n}-S_{n}$.
\begin{align}
\left| \phi\left(\frac{Y_{n}}{x}\right) - \phi\left(\frac{Y_{n}}{w}\right)\right|
\leq& \max_{t \in I_{n}-S_{n}} \left|\frac{Y_{n}}{t^{2}}\phi'\left(\frac{Y_{n}}{t}\right)\right||x-w|.\label{eq:firstmaxsplit}
\end{align}
For all $t \in I_{n}-S_{n}$, by Lemma~\ref{lem:ratiot}, we have $\frac{1}{2}\leq \frac{X_{n}}{t}\leq 2$. Using this and $Y_{n} = Z_{n}X_{n}$, we get the following bound on~\eqref{eq:firstmaxsplit}:
\begin{align}
\left| \phi\left(\frac{Y_{n}}{x}\right) - \phi\left(\frac{Y_{n}}{w}\right)\right|
\leq&\max_{t \in I_{n}-S_{n}} \left|\frac{2Z_{n}}{t}\phi'\left(\frac{Z_{n}X_{n}}{t}\right)\right||x-w|\nonumber\\
\leq&\max_{t \in I_{n}-S_{n}} 2C_{3}\left|\frac{Z_{n}}{t}\phi'\left(Z_{n}\right)\right||x-w|.\label{eq:doubling}
\end{align}
\eqref{eq:doubling} follows from the doubling property of $\phi'(\cdot)$, since $\frac{Z_{n}X_{n}}{t}$ and $Z_{n}$ are within a factor of two from each other by Lemma~\ref{lem:ratiot}. Now note that
$$\max_{t\in I_{n}-S_{n}}\frac{1}{|t|} \leq 2^{K_{n}}.$$
Applying this to the bound from~\eqref{eq:doubling} we get:
\begin{align*}
&\left| \phi\left(\frac{Y_{n}}{x}\right) - \phi\left(\frac{Y_{n}}{w}\right)\right| \leq 2^{K_{n}+1}C_{3}\left|Z_{n}\cdot \phi'\left(Z_{n}\right)\right||x-w|.
\end{align*}
This now gives a bound for~\eqref{eq:splitY} as below:
\begin{align}
\biggr| \log \frac{f_{X_{n}}(x \mid \mathcal{F}_{n})}{f_{X_{n}}(w \mid \mathcal{F}_{n})}\biggr| \leq
2^{K_{n}+1}C_{3}\left|Z_{n}\cdot \phi'\left(Z_{n}\right)\right||x-w| + \biggr|\log \frac{f_{X_{n}}(x\mid\mathcal{F}_{n-1})}{f_{X_{n}}(w\mid\mathcal{F}_{n-1})}\biggr|.
\end{align}
\qed

%%%%%%%%%%%%
%%  Psi-lemma
%%%%%%%%%%%%
\subsection{Proof of Lemma~\ref{lem:psilem}}
%\begin{lem}
%$\exists~T$ such that $\bigE[e^{\Psi_{n}2^{-T}}]$ is bounded.
%\label{lem:psilem}
%\end{lem}

Recall that our goal is to estimate the quantity
\[ \Psi_{n} = \sum_{i=0}^{n} 2^{K_{i}+1}C_{3}\left|Z_{i}\cdot \phi'\left(Z_{i}\right)\right| 2^{-K_{n}}. \]
Since the $K_i$'s must increase by at least one in each step, we have
$K_n - K_i \ge n - i$, and so
\begin{align}
  \Psi_n &\le \sum_{i=0}^{n} 2^{1 + i - n}C_{3}\left|Z_{i}\cdot \phi'\left(Z_{i}\right)\right|
  \le \sum_{i=0}^{n} 2^{1 + i - n}C_{3}\left(C_{1} + C_{2} \phi\left(Z_{i}\right)\right) \nonumber \\
  &= 4C_3C_1 + C_3C_2 \sum_{i=0}^n 2^{1 + i - n} \phi(Z_{i}), \label{eq:Psi-bound}
\end{align}
where we have also used the assumption $|z\cdot \phi'(z)| \leq C_{1} + C_{2} \phi(z)$.

Let $\delta' = \delta/(1 + \delta)$ as in Lemma~\ref{lem:phi-tail}. Consider any $\theta < \delta'/2$. Applying Lemma~\ref{lem:phi-tail}, we have for each $i$ that
\begin{align}
  \bigE[e^{\theta \phi(Z_i)}] &= \int_{-\infty}^\infty \theta e^{\theta t} \cdot \mathbb{P}\left( \phi(Z_i) \ge t \right) \,dt \nonumber \\
  &\le \int_{-\infty}^0 \theta e^{\theta t} \,dt + \frac{2 \theta}{\delta'} \int_0^\infty e^{(\theta - \delta')t}\,dt \nonumber \\
  &= 1 + \frac{2 \theta}{\delta'(\delta' - \theta)} \le 1 + \frac{4 \theta}{\delta'^2}. \label{eq:theta-bound}
\end{align}

Now, choose $T$ large enough so that $2^{1 - T} C_2C_3 < \delta'/2$. We can then apply \eqref{eq:theta-bound} to each term in \eqref{eq:Psi-bound} by taking $\theta = 2^{-T}C_3C_2 \cdot 2^{1 + i - n}$. This yields
\[ \bigE\left[ \exp\left( 2^{-T}C_3C_2 \cdot 2^{1 + i - n} \phi(Z_i) \right) \right] \le 1 + C \cdot 2^{i - n} \]
for a constant $C$ not depending on $n$. We then have
\begin{align*}
  \bigE[e^{2^{-T} \Psi_n}] &\le \bigE\left[ \exp\left( 4C_3C_1 + C_3C_2 \sum_{i=0}^n 2^{1 + i - n} \phi(Z_{i}) \right) \right] \\
  &\le e^{4C_3C_1} \prod_{i=0}^n \left(1 + C \cdot 2^{i - n} \right) \\
  &\le e^{4C_3C_1} \prod_{i=0}^\infty \left(1 + C \cdot 2^{-i} \right),
\end{align*}
which is a (finite) constant not depending on $n$.
\qed

%%%%%%%%%%%%%%%%%
%% Conclusion
%%%%%%%%%%%%%%%%%
\section{Conclusion}
This paper provides a first proof-of-concept converse for a control system observed over continuous multiplicative noise. However, there is an exponential gap between the scaling behavior of the achievable strategy and the converse. 

We note that if the system $\mathcal{S}_{a}$ in~\eqref{eq:systema} is restricted to using linear control strategies, then its performance limit is the same as that of a system with the same multiplicative actuation noise (i.e. the control $U_{n}$ is multiplied by a random scaling factor) but perfect observations (as in~\cite{controlcapacity}). Previous work has shown how to compute the control capacity for 
systems with multiplicative noise on the actuation channel~\cite{controlcapacity, gireejabeast}. However, computing the control capacity of the system $\mathcal{S}_{a}$, i.e. computing tight upper and lower bounds on the system growth factor $a$, remains open.

\bibliographystyle{abbrv}
\bibliography{tigerjournal.bib}
%\appendix

%\section{Converse proofs}
%%%%%%%%%%%%
%% Ln lemma
%%%%%%%%%%%%

%%%%%%%%%%
%% Lemma Mn
%%%%%%%%%%
%\subsection{Proof of Lemma~\ref{lem:mn}}

\end{document}